\documentclass[10pt,twocolumn]{IEEEtran}
\IEEEoverridecommandlockouts                              % This command is only
    \overrideIEEEmargins
    
\usepackage{IRM}

\title{\LARGE \bf
Control Contraction Metrics:  Convex and Intrinsic Criteria for Nonlinear Feedback Design
}
\author{
Ian R. Manchester \ \ Jean-Jacques E. Slotine
 \thanks{I. R. Manchester is with the Australian Centre for Field Robotics and  School of Aerospace, Mechanical and Mechatronic Engineering, University of Sydney, NSW 2006, Australia. Email: ian.manchester@sydney.edu.au.
J.-J. E. Slotine is with the Nonlinear Systems Laboratory, Massachusetts   Institute   of   Technology, Cambridge, MA, 02139. Email: jjs@mit.edu. This work was supported by the Australian Research Council (DP150100577).}
}

\begin{document}

\maketitle

\begin{abstract}
 We introduce the concept of a control contraction metric, extending contraction analysis to constructive nonlinear control design. We derive sufficient conditions for exponential stabilizability of all trajectories of a nonlinear control system. The conditions have a simple geometrical interpretation, can be written as a convex feasibility problem, and are invariant under coordinate changes. We show that these conditions are necessary and sufficient for feedback linearizable systems, and also  derive novel convex criteria for exponential stabilization of a nonlinear submanifold of state space. 
We illustrate the benefits of convexity by constructing a controller for an unstable polynomial system that combines local optimality and global stability, using a metric found via sum-of-squares programming.
\end{abstract}

\section{Introduction}
The concept of a Lyapunov function is central in nonlinear system analysis, and builds upon the intuitive notion of a system's energy dissipating over time \cite{khalil2002nonlinear, slotine1991applied}. For nonlinear control design, the natural extension is the control Lyapunov function (CLF): a generalized measure of energy that can be {\em made} to decrease by choice of control action, first formalized in \cite{artstein1983stabilization, sontag1983lyapunov} though implicit in earlier works.

If a CLF can be found, then surprisingly simple formulas yield stabilizing feedback designs for quite broad classes of systems \cite{sontag1989universal}, \cite[Ch. 4]{freeman2008robust}. However, the fundamental problem of {\em finding} a CLF remains challenging. For mechanical and electrical systems, physical energy often yields effective choices  \cite{slotine1991applied}, while for systems of particular ``triangular'' structures, backstepping and related methods can be applied \cite{khalil2002nonlinear, freeman2008robust}. 

For linear systems, a straightforward change of variables converts the CLF criteria to a linear matrix inequality (LMI) \cite{boyd1994linear}, \cite{dullerud2000course}, but for nonlinear systems the set of CLFs for a particular system is not necessarily convex or even connected \cite{rantzer2001dual}.  The density functions of \cite{rantzer2001dual}, the occupation measures of \cite{lasserre2008nonlinear}, and the Lyapunov measures of \cite{vaidya2010nonlinear} each offer ``dual'' representations that yield convex (but generally infinite-dimensional) searches guaranteeing almost-everywhere stability. Computationally tractable finite-dimensional approximations have been  based on gridding or the sum-of-squares relaxation \cite{parrilo2003semidefinite}.

The main result of this paper is that if the nonlinear stabilization problem is studied {\em differentially}, then the simple convexification results for linear systems are recovered by generalising the concept of a {\em contraction metric}.

Contraction analysis is based on the study of a nonlinear system by way of its differential dynamics (a.k.a. variational system) along solutions \cite{Lohmiller98}. Roughly speaking, since the differential dynamics are linear time-varying (LTV), many techniques from linear systems theory can be directly applied. A central result is that if all solutions of a smooth nonlinear system are locally exponentially stable in a common metric, then all solutions are globally exponentially stable. %A Riemannian contraction metric can be thought of as a family of quadratic Lyapunov functions for the LTV differential dynamics, and the Riemannian distance is then an incremental Lyapunov function for the original nonlinear system. 
Historically, basic convergence results on contracting systems can be traced back to the results of \cite{lewis1949metric} in terms of Finsler metrics, further explored in \cite{forni2014differential}, while convex conditions for existence and robustness of limit cycles were given in \cite{manchester2014transverse}. In contraction analysis, question of stability is decoupled from the specification of particular solutions, and this property is also relevant for control design: it is common in industrial plant-wide control and robot motion control to have a ``layered architecture'' in which a higher-layer generates target trajectories, and a lower layer guarantees accurate tracking. 

In this paper we introduce {\em universal stabilizability}: the property that every forward-complete solution of a system can be globally stabilized. We then define a {\em control contraction metric} (CCM) for a nonlinear system, and show that existence of a CCM is sufficient for universal exponential stabilizability. We also give extensions for stabilization of submanifolds.

The CCM stabilizability condition has a simple geometric interpretation: small displacements in directions orthogonal to the span of the control inputs must be ``naturally'' contracting. Here both the notions of orthogonality and contraction depend on the choice of metric. This can be thought of as a differential version of the CLF condition of \cite{artstein1983stabilization, sontag1989universal}. 
While the resulting conditions are arguably quite strong, in contrast to a CLF our criteria for the existence for a CCM can be formulated as a convex feasibility problem. Furthermore, unlike e.g. backstepping they are invariant under smooth changes of coordinates and affine feedback transformations, and necessary and sufficient for feedback linearizable systems.

The feedback controllers we propose will generally involve real-time optimization to find a minimal-length path with respect to the metric (a geodesic) joining the current state to the desired state. This problem is generally simpler than that in nonlinear model predictive control (MPC) since it has lower dimension, lacks dynamic constraints, and minimal geodesics are guaranteed to exist \cite{leung2017}.  If a state-independent metric exists, then geodesics are just straight lines and our method is closely related to well-known methods using quadratic Lyapunov functions, e.g. \cite[p. 99]{boyd1994linear} and  \cite{pavlov2006uniform}.

The main feedback controller we propose is smooth almost everywhere, but as with nonlinear MPC, our controller may be discontinuous at some points in state-space. To ensure existence of solutions at such points, we also propose a sampled-data controller that runs open-loop short intervals, a common strategy in nonlinear MPC \cite{fontes2003discontinuous}, and similar to the notion of s-stability introduced in \cite{clarke1997asymptotic}.

\section{Preliminaries and Problem Setup}\label{sec:setup}
For symmetric matrices $A,B$ the notation $A\ge B$ ($A>B$) means that $A-B$ is positive semidefinite (positive definite). 
%For a matrix $B$, the notation $B_\perp$ refers to an {\em annihilator}, i.e. a full-rank matrix satisfying $B'B_\perp=0$. 
The non-negative reals are denoted $\RR^+:=[0,\infty)$. Given a smooth matrix function $M(x,t)$ and vector field $v:\R^n\times\R^+\to \R^n$ defined for $x\in\R^n,t\in\R^+$, we use the following notation for directional derivative $\partial_v M := \sum_j\pder[M]{x_j}v_{j}$.  A set of sample times is a sequence $t_0, t_1, ...$ with $t_0=0$ and  $t_{i}<t_{i+1}$ for all $i$ and $t_i\rightarrow\infty$ as $i\rightarrow\infty$.

%$D^+$ denotes the upper Dini derivative: $D^+c(t) = \limsup_{h\searrow 0} (c(t+h)-c(t))/h$. 
%For simplicity of expression we generally assume functions are smooth ($C^\infty$), although in most cases this could be relaxed.

In this paper we consider control-affine nonlinear systems:
\begin{equation}\label{eq:sys}
\dot x = f(x,t)+B(x,t)u
\end{equation}
where $x(t)\in\RR^n, u(t)\in\RR^m$ are state and control, respectively, at time $t\in\RR^+$, and $f, B$ are smooth functions of their arguments. We denote the $i^{th}$ column of $B(x,t)$ by $b_i(x,t)$. We assume $u(t)$ is at least piecewise-continuous, and \eqref{eq:sys} holds with the right-derivative at points of discontinuity.

We define a {\em target trajectory} to be a forward-complete solution of \eqref{eq:sys}, i.e. a pair $(x^\star, u^\star)$ with $x^\star:\R^+\to \R^n$ piecewise differentiable and $u^\star:\R^+\to \R^m$ piecewise-continuous satisfying \eqref{eq:sys} for all $t\in \R^+$. 

We will consider open-loop, sampled-data and continuous feedback controllers. An open-loop controller is a mapping $(x(0),x^\star(\cdot), u^\star(\cdot),t)\mapsto u(t)$.  Given a set of sample times, a sampled-data feedback controller has the property that on each interval $[t_i,t_{i+1}),$ the control law is a mapping $(x(t_i),x^\star(t),u^\star(t),t)\mapsto u(t)$, while a continuous feedback controller is a mapping $(x(t),x^\star(t),u^\star(t), t)\mapsto u(t)$.

A target trajectory $x^\star, u^\star$ is said to be globally exponentially controllable (resp. stabilizable) if one can construct an open-loop (resp. feedback) controller such that for any initial condition $x(0)\in\R^n$, a unique solution $x(t)$ of \eqref{eq:sys} exists for all $t$ and satisfies
\begin{equation}\label{eq:expstab}
|x(t)-x^\star(t)|\le e^{-\lambda t}R|x(0)-x^\star(0)|,
\end{equation} 
where rate $\lambda>0$ and overshoot $R>0$ are constants independent of initial conditions. If every target trajectory is globally exponentially controllable (resp. stabilizable) then the system is said to be {\em universally exponentially controllable (resp. stabilizable)}.
The following example illustrates the distinction with global stabilizability of a particular solution.
\begin{ex} 
Consider the planar system
\[
\begin{bmatrix}\dot x_1 \\ \dot x_2\end{bmatrix} = \begin{bmatrix}-2x_1+x_1^2-x_2^2\\-6x_2+2x_1x_2\end{bmatrix}+\begin{bmatrix}1\\0\end{bmatrix}u,
\]
which has four equilibria \cite{rantzer2001dual}: $(x_1, x_2)=(0,0), \, (2, 0)$ and $(3,\pm \sqrt{3})$. The origin is globally exponentially stabilized by the feedback law
$
u = -x_1^2-x_2^2
$, since the quadratic Lyapunov function $V(x) = x_1^2+x_2^2$ verifies $\dot V = -4x_1^2-12x_2^2\le -4V$. However, notice that if $x_2(0)=0$ then $x_2(t)=0, \forall t\ge 0$ regardless of $x_1, u$, so it is impossible for a control input to move the state from the line $x_2=0$ to the equilibria at $(3,\pm \sqrt{3})$. Therefore this system is {\em not} universally stabilizable.
\end{ex}

We utilize the following standard results of Riemannian geometry, see, e.g., \cite{docarmo1992riemannian} for details. A Riemannian metric is a smoothly-varying inner product $\ip{\cdot}{\cdot}_x$ on the tangent space of a state manifold $\X$; this defines local notions of length, angle, and orthogonality. In this paper $\X=\R^n$ and the tangent space can also be identified with $\R^n$. We allow metrics to be smoothly time-varying, and use the following notation: $\ip{\delta_1}{\delta_2}_{x,t}=\delta_1'M(x,t)\delta_2$ and $\|\delta\|_{x,t} = \sqrt{\ip{\delta}{\delta}_{x,t}}$. We call a metric {\em uniformly bounded} if $\exists \alpha_2\ge \alpha_2>0$ such that $\alpha_1 I\le M(x,t) \le \alpha_2 I$ for all $x,t$. For a smooth curve $c:[0,1]\to\R^n$ we use the notation $c_s(s) :=\pder[c(s)]{s}$, and define the Riemannian length and energy functionals as 
\[
L(c,t):=\int_0^1\|c_s\|_{c,t}ds, \ \ E(c,t):=\int_0^1\|c_s\|^2_{c,t}ds,\] 
respectively, with integration interpreted as the summation of integrals for each smooth piece. 
Let $\Gamma$ be the set of piecewise-smooth curves $[0,1]\to\R^n$, and for a pair of points $x,y\in\R^n$, let $\Gamma(x,y)$ be the subset of $\Gamma$ connecting $x$ to $y$, i.e. curves $c\in\Gamma(x,y)$ if $c\in\Gamma$, $c(0)=x$ and $c(1)=y$. A smooth curve $c(s)$ is {\em regular} if $\pder[c]{s}\ne 0$ for all $s\in[0,1]$. The Riemannian distance $d(x,y,t):=\inf_{c\in\Gamma(x,y)}L(c,t)$, and we define $E(x,y,t):=d(x,y,t)^2$. Under the conditions of the Hopf-Rinow theorem a smooth, regular minimum-length curve (a geodesic) $\gamma$ exists connecting every such pair, and the energy and length satisfy the following inequalities: $E(x,y,t)=E(\gamma,t)=L(\gamma,t)^2\le L(c,t)^2 \le E(c,t)$ where $c$ is any curve joining $x$ and $y$. For time-varying paths $c(t,s)$, we also write $c(t):=c(t,\cdot):[0,1]\to\R^n$.

A central result of \cite{Lohmiller98} is that if there exists a uniformly bounded metric $M(x,t)$ such that
$
\dot M + \pder[f]{x}'M+M\pder[f]{x}\le -2\lambda M,
$
where $\dot M = \pder[M]{t}+\partial_f M$, 
then the system is contracting with rate $\lambda$, i.e. $\ddt \|\delta_x\|_{x,t}\le -\lambda \|\delta_x\|_{x,t}$.
By integrating along minimizing geodesics, we see that $d(x,y,t)$ and $E(x,y,t)$ between any pair of points $x,y$ both decrease exponentially under the flow of the system, and thus can serve as  incremental Lyapunov functions. Such systems are called {\em contracting} systems and $M$ is a {\em contraction metric}. Similarly, we call a system {\em strictly contracting with rate $\lambda$} if $\ddt \|\delta_x\|_{x,t}< -\lambda \|\delta_x\|_{x,t}$ for $\delta_x\ne 0$. Since $M(x,t)>0$, a system which is contracting with rate $\lambda>0$ is strictly contracting with any rate less than $\lambda$. %so the ``strictness'' could be thought of as an attribute of the {\em rate} more than the {\em system}.

%The proofs of our results are collected in the appendix.

%We restrict attention to Riemannian metrics in this paper, i.e. $V(x,\delta_x,t):=\d_x'M(x,t)\d_x$, because the resulting conditions can be made convex, however the basic strategy can easily be extended to, e.g., Finsler metrics \cite{bao2000introduction}, \cite{lewis1949metric}, \cite{forni2014differential}.

%
%%
%Variations on this theme can be constructed with other measures of differential length, including vector norms \cite{Lohmiller98} and Finsler metrics \cite{lewis1949metric,forni2014differential}.

%%% Local Variables: 
%%% mode: latex
%%% TeX-master: "CCM_Journal"
%%% End: 

\section{Control Contraction Metrics}\label{sec:ccm}
To analyse stabilizability we utilize the ``extended'' system consisting of \eqref{eq:sys} paired with its differential  dynamics:
\begin{equation}
\dot\delta_x = A(x,u,t)\delta_x+B(x,t)\delta_u,\label{eq:diffdyn}
\end{equation}
defined along solutions $x(t), u(t)$, where 
$
A := \pder[f]{x}+\sum_{i = 1}^m\pder[b_i]{x}u_i.
$
%Here $\delta_x(t)$ should be thought of as a tangent vector to a smooth path of states through $x(t)$, and $\delta_u(t)$ as a tangent vector to a smooth path of controls through $u(t)$.

Let us begin by examining the case when a known controller makes the system contracting, and yet is flexible enough that any target trajectory of \eqref{eq:sys} remain possible in closed-loop.

\begin{prop}\label{prop:contracting}
	Suppose there exists a smooth feedback control law $u= k(x,t)+v$ that makes the closed-loop system strictly contracting with rate $\lambda$ in some metric $M(x,t)$ for any piecewise-continuous signal $v(t)$. %Assume that by choice of $v$ all feasible trajectories of \eqref{eq:sys} remain feasible. I.e. for every $u_0\in\R^m, x_0\in\R^n,t_0\in \R^+$ there exists a $v_0\in\R^m$ such that $ k(x_0,v_0,t_0) = u_0$. 
	Then for all $x,u,t$
\begin{equation}\label{eq:ccm_explicit}
\dot M +(A+BK)'M+M(A+BK)<-2\lambda M,	
\end{equation}
where $K= \pder[k]{x}$, and for $\delta_x\ne 0$ the following is true:
\begin{equation}\label{eq:CCM}
\d_x'MB=0 \implies \d_x'(\dot M + A'M+MA+2\lambda M)\d_x<0.
\end{equation}
\end{prop}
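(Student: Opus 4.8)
The plan is to split the statement into two short implications: the matrix inequality \eqref{eq:ccm_explicit} is a pointwise restatement of the closed-loop strict contraction property, and \eqref{eq:CCM} is then just \eqref{eq:ccm_explicit} restricted to the subspace annihilated by $B'M$.

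First I would compute the differential dynamics of the closed loop. Substituting $u=k(x,t)+v$ into \eqref{eq:sys} gives $\dot x = f(x,t)+B(x,t)k(x,t)+B(x,t)v$; differentiating this vector field in $x$ with the signal $v$ held fixed, and using the product rule $\pder[Bk]{x}=\sum_i\pder[b_i]{x}k_i+BK$ with $K=\pder[k]{x}$, one obtains $\dot\delta_x = (A+BK)\delta_x+B\delta_v$, where $A=A(x,u,t)$ is evaluated at the actual control $u=k(x,t)+v$. Hence along any closed-loop solution the variational transition matrix is $A+BK$, and for a variation with $\delta_v=0$ we get $\ddt(\delta_x'M\delta_x)=\delta_x'\big(\dot M+(A+BK)'M+M(A+BK)\big)\delta_x$, with $\dot M=\pder[M]{t}+\partial_{f+Bu}M$ the derivative of $M$ along the closed-loop flow. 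Since $\ddt\|\delta_x\|_{x,t}^2 = 2\|\delta_x\|_{x,t}\ddt\|\delta_x\|_{x,t}$ and the relevant matrix is symmetric, the strict contraction inequality $\ddt\|\delta_x\|_{x,t}<-\lambda\|\delta_x\|_{x,t}$ holding for all $\delta_x\ne 0$ is equivalent, at each point of the trajectory, to $\dot M+(A+BK)'M+M(A+BK)<-2\lambda M$.

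Next I would promote this to a statement at every $(x,u,t)$: given an arbitrary triple $(x_0,u_0,t_0)$, set $v_0:=u_0-k(x_0,t_0)$ and take the closed-loop solution through $x_0$ at time $t_0$ driven by the constant input $v\equiv v_0$ (which exists at least on a short time interval); by hypothesis the above matrix inequality holds along it, in particular at $t_0$, where $(x,u)=(x_0,u_0)$. Since the triple was arbitrary, this is exactly \eqref{eq:ccm_explicit}. Expanding $(A+BK)'M+M(A+BK)=A'M+MA+K'B'M+MBK$ rewrites \eqref{eq:ccm_explicit} as $\dot M+A'M+MA+2\lambda M<-(K'B'M+MBK)$. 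Now fix any $\delta_x\ne 0$ with $\delta_x'MB=0$, equivalently $B'M\delta_x=0$ since $M$ is symmetric; pre- and post-multiplying by $\delta_x'$ and $\delta_x$ makes the right-hand side contribute $-(K\delta_x)'(B'M\delta_x)-(\delta_x'MB)(K\delta_x)=0$, leaving $\delta_x'(\dot M+A'M+MA+2\lambda M)\delta_x<0$, which is \eqref{eq:CCM}.

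The conceptual content is light; the delicate points are bookkeeping. The step I expect to be the main obstacle is getting the closed-loop variational equation right --- in particular retaining the term $\sum_i\pder[b_i]{x}u_i$ inside $A$, since $A$ must be evaluated at $u=k+v$ and not at $v$ alone --- together with being explicit that $\dot M$ in \eqref{eq:ccm_explicit}--\eqref{eq:CCM} is taken along $f+Bu$, which is what lets these be pointwise statements over all $x,u,t$ rather than statements along one trajectory.
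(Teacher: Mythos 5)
Your proof is correct and follows essentially the same route as the paper's: compute $\ddt(\delta_x'M\delta_x)$ along the closed-loop flow, use the arbitrariness of $v$ to turn the contraction inequality into the matrix inequality \eqref{eq:ccm_explicit} at every $(x,u,t)$, and then observe that the $K$-terms vanish on the kernel of $B'M$ to get \eqref{eq:CCM}. The only difference is that the paper exploits affineness in $v$ to additionally extract the Killing condition $\partial_{b_i}M+\pder[b_i]{x}'M+M\pder[b_i]{x}=0$ (condition C\ref{C2}, invoked later in Section \ref{sec:strong}), which your argument does not isolate but which is not needed for the proposition as stated.
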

It is clear that \eqref{eq:ccm_explicit} $\Longrightarrow$ \eqref{eq:CCM}; the proof of \eqref{eq:ccm_explicit}, and all subsequent results in this paper, are collected in the appendix.
%
%Let us begin by discussing the case of a known feedback control law $u= k(x,v,t)$, where $v\in\R^m$ is an auxiliary input, that makes the closed-loop system contracting with rate $\lambda$ in some metric $M(x,t)$. Assume that by choice of $v$ all feasible trajectories of \eqref{eq:sys} remain feasible. I.e. for every $u_0\in\R^m, x_0\in\R^n,t_0\in \R^+$ there exists a $v_0\in\R^m$ such that $ k(x_0,v_0,t_0) = u_0$. It is straightforward to show that this implies
%\begin{equation}\label{eq:ccm_explicit}
%\dot M+(A+B K)'M+M(A+B K)\le -2\lambda M,
%\end{equation}
%for all $x,u,t$, where $ K = \pder[ k]{x}$ and $\dot M$ is the total derivative $\pder[M]{t}+\partial_{f+B u}M$. An immediate consequence of \eqref{eq:ccm_explicit} is that
%\begin{equation}\label{eq:CCM_le}
%\d_x'MB=0 \implies \d_x'(\dot M + A'M+MA+2\lambda M)\d_x\le 0.
%\end{equation}
%Taking square roots and writing in coordinate-free notation gives the statement that if $\ip{\d_x}{b_i(x)}_{x,t}=0 $  for all $i=1,2,..., m$ then $\ddt \|\d_x\|_{x,t}\le -\lambda \|\d_x\|_{x,t}.
%$
%I.e. {\em every tangent vector $\d_x$ orthogonal to the span of actuated directions $b_i(x)$ is naturally contracting with rate $\lambda$}.

Since \eqref{eq:CCM} is independent of the particular control law $k$, it describes an intrinsic property of the system \eqref{eq:sys}: if $\ip{\d_x}{b_i(x)}_{x,t}=0 $  for all $i=1,2,..., m$ then $\ddt \|\d_x\|_{x,t}< -\lambda \|\d_x\|_{x,t}.
$
I.e. {\em every tangent vector $\d_x$ orthogonal to the span of actuated directions $b_i(x)$ is naturally contracting with rate $\lambda$}.
 It is interesting to ask whether \eqref{eq:CCM} {\em implies the existence} of some form of stabilizing control for any target trajectory. Our main theoretical result is that this is indeed the case: %The first main result of the paper is the following:

\begin{thm} \label{thm:CCM} 
%Consider the system \eqref{eq:sys}, \eqref{eq:diffdyn}. 
If there exists a uniformly-bounded metric $M(x,t)$, i.e. $\alpha_1 I\le M(x,t) \le \alpha_2 I$, 
%$M(x,t)$, with bounds $\alpha_2\ge \alpha_1>0$, 
for which \eqref{eq:CCM} holds for all $\d_x\ne 0, x, u, t$, then System \eqref{eq:sys} is
\begin{enumerate}[ 1)]
\item universally exponentially open-loop controllable,
\item universally exponentially stabilizable via sampled-data feedback with arbitrary sample times,
\item universally exponentially stabilizable via continuous feedback defined almost everywhere, and everywhere in a neighbourhood of the target trajectory.
\end{enumerate}
all with rate $\lambda$ and overshoot $R=\sqrt{\tfrac{\alpha_1}{\alpha_2}}$.
\end{thm}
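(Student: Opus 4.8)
The plan is to isolate one engine --- a pointwise \emph{differential controller} read off from \eqref{eq:CCM} --- and obtain all three conclusions by integrating it along curves. First, the \emph{differential stabilizability} lemma: along \eqref{eq:diffdyn} one has, for any metric, $\ddt(\delta_x' M \delta_x) = \delta_x'\dot W\delta_x + 2\delta_x'MB\delta_u$ where $\dot W$ abbreviates $\dot M + A'M + MA$, and \eqref{eq:CCM} says precisely that $\delta_x'(\dot W + 2\lambda M)\delta_x < 0$ on $\{\delta_x : \delta_x'MB = 0\}\setminus\{0\}$. An S-procedure (completion-of-squares) argument then produces a nonnegative scalar $\rho(x,u,t)$ for which the min-norm choice $\delta_u = k_\delta(x,\delta_x,u,t) := -\tfrac{1}{2}\rho(x,u,t)\,B(x,t)'M(x,t)\delta_x$ gives $\delta_x'(\dot W+2\lambda M)\delta_x - \rho|B'M\delta_x|^2 \le 0$, i.e. $\ddt\|\delta_x\|^2_{x,t} \le -2\lambda\|\delta_x\|^2_{x,t}$, for all $\delta_x$. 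I would also check that $\rho$, and hence $k_\delta$, can be chosen measurable and locally bounded, and smooth wherever $B'M\delta_x \ne 0$ --- the regularity needed below.

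For universal open-loop controllability, given a target $(x^\star,u^\star)$ and initial state $x(0)$, I would take a minimizing geodesic $\gamma_0\in\Gamma(x^\star(0),x(0))$ --- it exists because $\alpha_1 I \le M \le \alpha_2 I$ makes $(\R^n, M(\cdot,0))$ complete, so Hopf--Rinow applies --- and propagate a one-parameter family of open-loop solutions $\bar x(s,\cdot)$, $s\in[0,1]$, of $\partial_t\bar x = f(\bar x,t) + B(\bar x,t)\bar u$ with $\bar x(s,0) = \gamma_0(s)$, anchored by $\bar x(0,\cdot) = x^\star(\cdot)$, $\bar u(0,\cdot) = u^\star(\cdot)$, and $\partial_s\bar u = k_\delta(\bar x, \partial_s\bar x, \bar u, \cdot)$; the controller is $u(\cdot) := \bar u(1,\cdot)$ and the realised trajectory is $x(\cdot) = \bar x(1,\cdot)$. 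Differentiating the system equation in $s$ yields $\partial_t(\partial_s\bar x) = A\,\partial_s\bar x + B\,\partial_s\bar u$, so $\partial_s\bar x$ solves \eqref{eq:diffdyn} with $\delta_u = k_\delta$; by the lemma $\ddt\|\partial_s\bar x(s,t)\|^2_{\bar x(s,t),t} \le -2\lambda\|\partial_s\bar x(s,t)\|^2$ for each $s$, and integrating over $s$ gives $\ddt E(\bar x(\cdot,t),t) \le -2\lambda E(\bar x(\cdot,t),t)$, hence $E(\bar x(\cdot,t),t) \le e^{-2\lambda t} E(\gamma_0,0) = e^{-2\lambda t} d(x(0),x^\star(0),0)^2$. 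Since $\bar x(\cdot,t)\in\Gamma(x^\star(t),x(t))$, we get $d(x(t),x^\star(t),t)^2 \le E(\bar x(\cdot,t),t)$, and the uniform bounds give $|x(t)-x^\star(t)|^2 \le \alpha_1^{-1} d(\cdot)^2 \le \alpha_1^{-1}\alpha_2\, e^{-2\lambda t}|x(0)-x^\star(0)|^2$, i.e. \eqref{eq:expstab} with rate $\lambda$ and overshoot $R = \sqrt{\alpha_2/\alpha_1}$; the same estimate makes $x(\cdot)$ forward complete a posteriori. The sampled-data controller is then immediate: on each $[t_i,t_{i+1})$ re-run this open-loop law from $(x(t_i),x^\star(t_i))$; solutions exist on each interval by ordinary ODE theory since the law is open-loop there, and the per-interval decay of $E$ telescopes to \eqref{eq:expstab}.

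For the continuous feedback I would set $u(t) := u^\star(t) + \int_0^1 k_\delta(\gamma(s,t),\gamma_s(s,t),\cdot)\,ds$, where $\gamma(\cdot,t)\in\Gamma(x^\star(t),x(t))$ is a minimizing geodesic and the integrand is the $s$-derivative $\partial_s\bar u$ of the control built along $\gamma$ as above; take $V(x,x^\star,t) := E(x,x^\star,t)$ as an incremental Lyapunov function and bound its upper Dini derivative along the closed loop by $-2\lambda V$ via the first variation of energy --- the interior variation vanishes because $\gamma$ is a geodesic, the $s=1$ boundary term again reproduces $\partial_s\bar x$ solving \eqref{eq:diffdyn} under $\delta_u = k_\delta$, and the $s=0$ term drops because $\gamma(0,\cdot)\equiv x^\star$ --- with non-uniqueness of the minimizing geodesic absorbed by a Danskin-type envelope argument. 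Near the target $d$ is small, the minimizing geodesic is unique and depends smoothly on its endpoints, so the feedback is continuous there and closed-loop solutions exist; away from the target it is only defined almost everywhere, which is exactly why the sampled-data version is offered as well. In each case the constant chain above delivers rate $\lambda$ and overshoot $R$.

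The main obstacle is the differential stabilizability lemma itself: \eqref{eq:CCM} is a bare pointwise inequality carrying no uniformity and $\dot W$ need not be globally bounded, so the S-procedure multiplier $\rho(x,u,t)$ may grow without bound, and controlling it well enough to keep $k_\delta$ measurable, locally bounded and smooth off $\{B'M\delta_x = 0\}$ is delicate. Close behind is the well-posedness of the coupled $s$--$t$ family $\bar x(s,t)$, which is really a PDE; I would handle it either by a Banach fixed point in a path space such as $C^1([0,1],\R^n)$ or by passing to the limit in an $s$-discretization. The non-smoothness and non-uniqueness of minimizing geodesics in the continuous-feedback case is the reason the Dini-derivative reasoning and the ``almost everywhere'' qualifier are needed. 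Propagating the constants to the stated rate $\lambda$ and overshoot $R$ is then bookkeeping.
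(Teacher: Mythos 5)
Your proposal follows essentially the same route as the paper: build a differential controller from \eqref{eq:CCM}, integrate it along a curve of states to get a path of controls, propagate the curve forward so that its $s$-derivative obeys \eqref{eq:diffdyn}, conclude exponential decay of the Riemannian energy, and convert to Euclidean bounds via $\alpha_1 I\le M\le\alpha_2 I$; the sampled-data and continuous-feedback cases are handled exactly as in the paper (re-computing geodesics at sample times; restricting to points with a unique minimizing geodesic and using the first variation of energy). The one point where the paper does real work that you leave as an acknowledged obstacle is the path-integrability of the differential controller, i.e.\ ruling out finite escape of the $s$-ODE $\u_s = k_\delta(c(s),c_s(s),\u(s),t)$. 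You obtain $\rho(x,u,t)$ by a pointwise Finsler/S-procedure argument, which gives a $k_\delta$ linear in $\delta_x$ but says nothing a priori about how $\rho$ grows in $u$ (and it must grow, since $A$ and $\dot M$ are affine in $u$); a badly chosen $\rho$ could produce finite escape in $s$. The paper closes this by instead using Sontag's universal formula $\rho = (\mf a+\sqrt{\mf a^2+\mf b^2})/\mf b$, whose explicit form is globally Lipschitz in $u$ on any interval where $\mf b>0$ (because $\mf a$ is affine in $u$), so the standard comparison lemma gives existence on all of $[0,1]$; the same formula also supplies the smoothness of $k_\delta$ and, together with the small-control property, its continuity at $\delta_x=0$, which is what you need for continuity of the feedback at $x=x^\star$. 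Your Finsler-based $\rho$ can be repaired to achieve the same thing (one can always enlarge $\rho$, and an at-most-affine-in-$|u|$ choice suffices), but that argument needs to be made; as written it is the only genuine gap. A minor remark: your overshoot $R=\sqrt{\alpha_2/\alpha_1}$ agrees with what the paper's proof actually derives (the theorem statement's $\sqrt{\alpha_1/\alpha_2}$ is a typo), and your appeal to Hopf--Rinow via the uniform bounds is a legitimate simplification of the paper's Lemma on quadratically bounded dual metrics.
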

We refer to a metric satisfying the conditions of this theorem as a {\em control contraction metric} (CCM) for the system \eqref{eq:sys}. The proof of this theorem is given in the appendix, but here we briefly describe the main idea and construction of controllers.

Given a CCM, Lemma \ref{lem:lVdecr} in the appendix establishes the existence of a {\em differential feedback controller} $\delta_u=k_\delta(x,\delta_x,u,t)$ that achieves closed-loop exponential stabilization of the differential dynamics \eqref{eq:diffdyn} along all solutions:
\[
\ddt (\d_x'M\d_x)=\d_x'\dot M\d_x + 2\d_x'M(A\d_x+Bk_\delta)<-2\lambda \d_x'M\d_x
\]
and furthermore, is path-integrable, so that for any smooth path $c\in\Gamma$ and any $u^\star\in\R^m$ and $t\in\R^+$, the following integral equation has a unique solution:
\begin{equation}\label{eq:k_p}
k_p(c,u^\star,t,s):=u^\star+\int_0^s k_\delta (c(\s),c_s(\s), k_p(c,u^\star,t,\s), t)d\s.
\end{equation}
The motivation for this construction is to give a smooth path of control signals $k_p$ with tangent vectors $\pder[k_p]{s} = k_\delta$ for all $s\in[0,1]$ and boundary condition $k_p=u^\star$ at $s=0$.

The path-integrability condition is a significantly weaker requirement than the $k_\delta$ being {\em completely integrable}, i.e. of the form $k_\delta=K(x,t)\delta_x$, with $K$ the Jacobian of a feedback controller, as was assumed in Proposition \ref{prop:contracting}. This distinction will be important for our convex conditions in Section \ref{sec:dual}.

\subsubsection{Open-Loop Control}\label{sec:OL_construction}

For $t_i\ge 0$, consider a bounded or unbounded time interval in one of the following forms: $\mathcal T = [t_i, t_{i+1})$, $\mathcal T = [t_i, t_{i+1}]$ or $\mathcal T=[t_i,\infty)$. 
\begin{enumerate}
	\item At the initial time $t_i$ measure $x(t_i)$ and construct a smooth path $c(t_i)\in\Gamma(x^\star(t_i),x(t_i))$.
	\item For each $t\in\mathcal T$, apply the control signal $u(t) = k_p(c(t),u^\star(t),t,1)$, where $c(t)$ is the forward image of the path $c(t_i)$ with the path of controls \eqref{eq:k_p}, i.e. for each $s\in[0,1]$ and $t\in\mathcal T$, $c(t,s)$ satisfies	\begin{equation}\label{eq:forward_image}
\ddt c(t,s) = f(c(t,s),t)+B(c(t,s),t)k_p(c(t),u^\star(t),t,s).\end{equation}
\end{enumerate}
When this strategy is applied on an interval $\mathcal T = [t_i,\infty)$ the length of the curve $c(t)$ shrinks exponentially, and allows us to establish Theorem \ref{thm:CCM} claim 1.

The bound $R=\sqrt{\frac{\alpha_2}{\alpha_1}}$ given in the theorem is achieved if the initial path $c_0$ is a minimal geodesic joining $x^\star(t_i)$ to $x(t_i)$, existence of which is established in Lemma \ref{lem:hopf_rinow} in the appendix.
With any other initial path, exponential stability is still achieved with the same rate but perhaps with larger overshoot. Note that when $c(t_i)$ is a geodesic it is in general {\em not} the case that $c(t)$ is a geodesic for $t>t_i$, see Fig. \ref{fig:Gamma_x}.

%While this controller is well-defined, a practical implementation will generally require discretized numerical approximation of \eqref{eq:k_p} and \eqref{eq:forward_image}.

\begin{figure}
\begin{center}
\includegraphics[width=0.65\columnwidth]{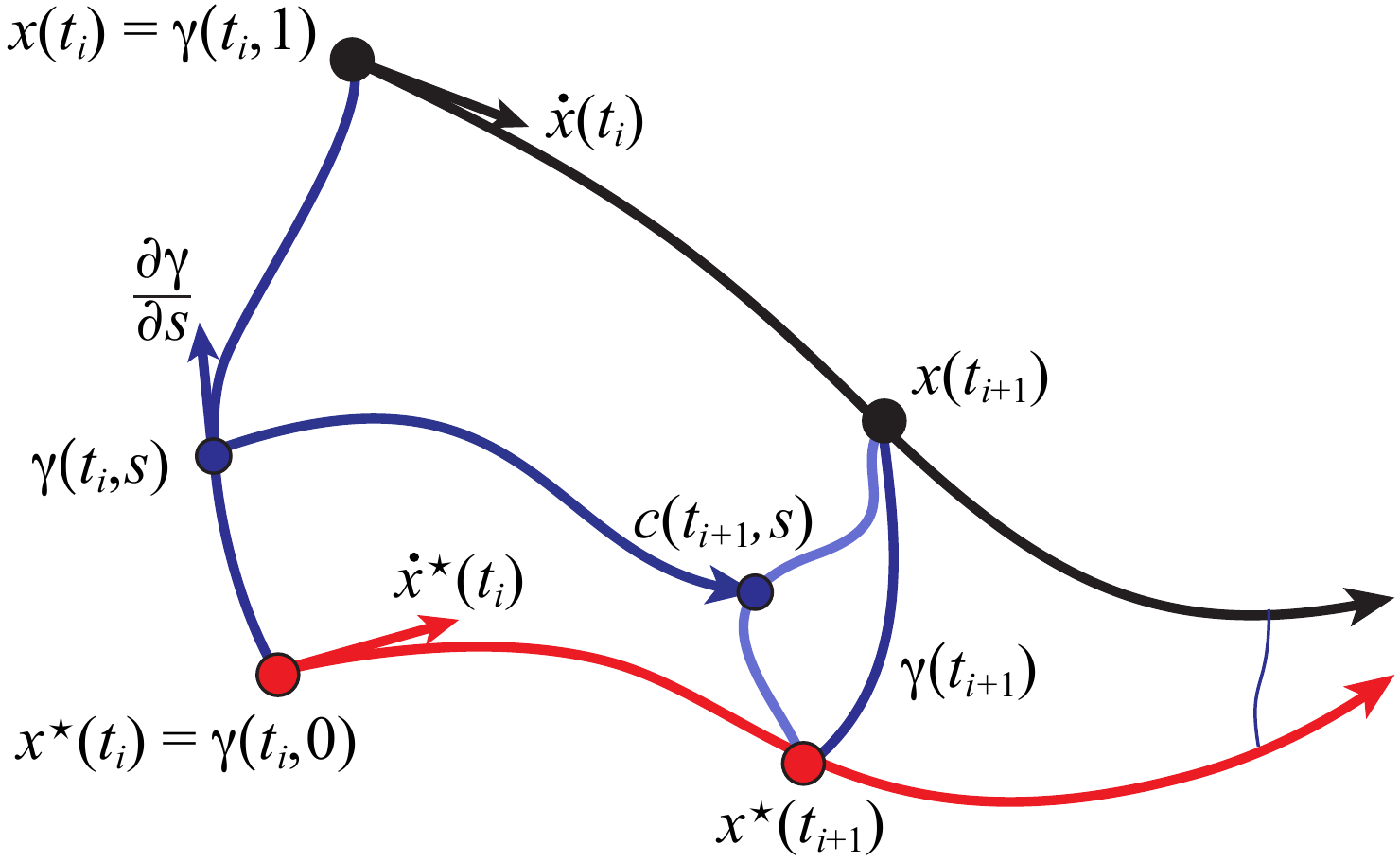}
\caption{An illustration of the geometry of solutions using the open-loop or sampled-data CCM-based control over an interval $[t_i, t_{i+1}]$. The target trajectory $x^\star(t)$ is shown in red, system trajectory $x(t)$ in black. Paths joining $x^\star(t)$ to $x(t)$ are shown in blue. }
\label{fig:Gamma_x}
\end{center}
\end{figure}

\subsubsection{Sampled-Data Feedback Controller}
The open-loop controller can be extended to sampled-data feedback by recomputing geodesics at the sampling instants. To be precise:
\begin{enumerate} 
\item At each sample time $t_i$, measure the state $x(t_i)$ and compute a minimal geodesic $\gamma_i \in \arg\min_{c\in\Gamma_i} E(c,t_i)$ where $\Gamma_i:=\Gamma(x^\star(t_i),x(t_i))$. 
\item On the interval $\mathcal T = [t_i, t_{i+1})$ apply the open-loop control described above with $c(t_i)=\gamma_i$.
\end{enumerate} 
Note that this is stabilizing with {\em any} choice of sample times,  including uniform sampling: $t_i=it_s$ for some fixed $t_s>0$.
%Lemma \ref{lem:hopf_rinow} in the appendix guarantees the existence of at least one minimal geodesic joining any pair of states when the metric is uniformly bounded.  %The proof is based on the fact that on each interval $[t_i, t_{i+1})$ the length of the curve $c(t)$ shrinks exponentially, then at $t=t_{i+1}$ it is replaced by an {\em even shorter} curve $\gamma_{i+1}$ and the process repeats.

\subsubsection{Smooth Feedback, Uniquely Defined Almost Everywhere and in a Neighbourhood of $x^\star$}

By taking the limit as sampling interval goes to zero, one can obtain a continuous-time controller which does away with the need to solve \eqref{eq:forward_image} over the inter-sample intervals. %The one difficulty is that the controller defined above may be discontinuous and even multiply-defined because of non-uniqueness of the minimizing geodesic, which can occur on a set zero Lebesgue measure. 
Specifically:

\begin{enumerate}
	\item Measure the state $x(t)$ and a minimal geodesic $\gamma = \arg\min_{c\in\Gamma(x^\star(t),x(t))} E(c,t_i)$.
\item Apply the control signal $u(t) =  k_p(\gamma,u^\star(t),t,1)$.
\end{enumerate}
This defines a mapping $(x(t), x^\star(t),u^\star(t),t\mapsto u(t)$, however a difficulty is that it may be multiply-defined or non-smooth at some states $x(t)$, specifically points on the {\em cut locus}, denoted by $\mathfrak C(x^\star,t)$, which is the set of points for which non-unique minimizing geodesics exist from $x^\star$ (cut points) and/or the first-order minimality condition fails (conjugate points). 

This set is known to have zero Lebesgue measure. Let use define $\mathfrak D(x^\star,t):=\R^n/(\mathfrak C(x^\star,t) \cup x^\star )$ which is diffeomorphic to punctured open ball. For every $x\in \mathfrak D(x^\star,t)$ there is a unique minimal geodesic $\gamma$ joining $x$ and $x^\star$ \cite[Ch 13]{docarmo1992riemannian}, and we show in the appendix that the above controller is smooth on $\mathfrak D(x^\star,t)$ and continuous at $x=x^\star(t)$. This controller is universally exponentially stabilizing under the technical assumption that the set of times at which $x(t)\in \mathfrak C(x^\star,t)$ has zero measure.

\subsection{Stronger Conditions Giving Simpler Controllers}\label{sec:strong}

Since the differential dynamics are linear, it is tempting to look for an admissible differential feedback controller of the form $\delta_u= K(x,t)\d_x$ satisfying \eqref{eq:ccm_explicit}. We will show that this is possible under the following slightly stronger conditions:

\begin{enumerate}[C1:]
\item \label{C1} if $\delta_x\ne 0$ satisfies $\delta_x'MB=0$, then
\[
\delta_x'\left(\pder[M]{t}+\partial_f M+\pder[f]{x}'M+M\pder[f]{x}+2\lambda M\right)\delta_x <0,
\]
\item \label{C2} for each  $i=1, 2, ..., m$,
$
\partial_{b_i} M + \pder[b_i]{x}'M+M\pder[b_i]{x} = 0.
$
\end{enumerate}

These stronger conditions also hold under the assumptions of Proposition \ref{prop:contracting}, as is clear from the proof in the appendix. Condition C\ref{C1} says that the uncontrolled system is contracting in directions orthogonal to the span of the control inputs. Condition C\ref{C2} ensures that large $u$ of unknown sign cannot cause expansion of $\|\d_x\|$.  Formally it states that the vector fields $b_i$ are Killing fields for the metric $M$. 

In particular, if $B$ is of the form $[0, I]'$, with $0$ and $I$ the zero and identity matrices of appropriate dimension, then Condition C\ref{C2} says that $M$ must not depend on the last $m$ state variables.  
By applying Finsler's theorem (see, e.g., \cite{uhlig1979recurring}) pointwise in $x$ and $t$ to condition C\ref{C1}, we immediately obtain the following:
\begin{prop} \label{prop:rho} Condition C\ref{C1} is equivalent to the existence of a scalar multiplier $\rho(x,t)$ such that for all $x,t$:
\begin{equation}\label{eq:CCM_rhoform}
\pder[M]{t}+\partial_f M+\pder[f]{x}'M+M\pder[f]{x}-\rho MBB'M+2\lambda M  < 0.
\end{equation}
\end{prop}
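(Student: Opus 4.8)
The plan is to read Proposition~\ref{prop:rho} as a pointwise application of Finsler's lemma (equivalently, losslessness of the S-procedure for a single quadratic constraint), after re-expressing the linear side condition $\delta_x'MB=0$ as an equivalent quadratic one. Throughout, fix $(x,t)$, abbreviate by $W=W(x,t)$ the symmetric matrix $\pder[M]{t}+\partial_f M+\pder[f]{x}'M+M\pder[f]{x}+2\lambda M$ appearing in C\ref{C1}, and set $Q=Q(x,t):=MBB'M\ge 0$. The elementary observation that drives everything is that, for any $\delta_x$, $\delta_x'MB=0 \iff B'M\delta_x=0 \iff \delta_x'Q\delta_x=|B'M\delta_x|^2=0$; hence C\ref{C1} is exactly the assertion that $\delta_x'W\delta_x<0$ for every $\delta_x\ne 0$ with $\delta_x'Q\delta_x=0$, while \eqref{eq:CCM_rhoform} at $(x,t)$ is exactly $W-\rho Q<0$.

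The direction ``\eqref{eq:CCM_rhoform} $\Rightarrow$ C\ref{C1}'' is immediate and I would dispatch it first: if $W-\rho Q<0$ then any $\delta_x\ne 0$ with $\delta_x'MB=0$ has $\delta_x'Q\delta_x=0$, so $\delta_x'W\delta_x=\delta_x'(W-\rho Q)\delta_x<0$.

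For the converse I would argue pointwise in $(x,t)$. Since $Q\ge 0$, the constraint set $\{\delta_x:\delta_x'Q\delta_x=0\}$ coincides with $\ker Q$, so C\ref{C1} says precisely that $W$ is negative definite on the subspace $\ker Q$. Writing $\R^n=\ker Q\oplus(\ker Q)^\perp$ and using the induced orthogonal block decomposition, $Q$ becomes block-diagonal with a strictly positive definite $(2,2)$ block and a zero $(1,1)$ block, while the $(1,1)$ block of $W$ is negative definite; a Schur-complement estimate then shows $W-\rho Q<0$ for every sufficiently large $\rho>0$, which supplies the desired multiplier $\rho(x,t)$. Carrying this out at each $(x,t)$ produces a scalar function $\rho$ for which \eqref{eq:CCM_rhoform} holds everywhere. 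The degenerate case $\ker Q=\{0\}$, possible only when $m\ge n$ and $B'M$ has full column rank, is covered by the same bound since then $Q>0$ and C\ref{C1} is vacuous. This pointwise argument is exactly the content of Finsler's theorem \cite{uhlig1979recurring}, so one may instead just cite it.

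I expect the only delicate point to be that $Q=MBB'M$ is in general merely positive semidefinite rather than sign-indefinite, so the version of Finsler's lemma one must invoke is the semidefinite-constraint form (losslessness of the S-procedure with a single constraint) rather than the classical indefinite-matrix-pencil statement; the Schur-complement estimate above keeps this self-contained. I would also flag, without attempting to prove more, that the proposition claims only pointwise existence of $\rho(x,t)$ and not that it inherits the smoothness of $M,f,B$ — although strictness of \eqref{eq:CCM_rhoform} does permit choosing $\rho$ positive and continuous in $(x,t)$.
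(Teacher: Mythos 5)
Your proof is correct and takes essentially the same route as the paper, which simply applies Finsler's theorem pointwise in $(x,t)$ to condition C1; your Schur-complement argument is just a self-contained derivation of the semidefinite-constraint form of that lemma. Your closing remark on regularity of $\rho$ is also consistent with the paper, which notes (Remark~\ref{rem:rho_gm}) that any $\rho\ge\rho_0$ also works, so a smooth $\rho$ can always be constructed.
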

%\begin{proof}
%A version of Finsler's theorem (see, e.g. \cite{uhlig1979recurring}) states the following: given a square matrix $H$ and a matrix $G$, then $\delta_x'H\delta_x<0$ for all $\delta_x\ne 0$ satisfying $\delta_x'G=0$ if and only if there exists a scalar multiplier $\rho$ such that $H-\rho GG'<0$. Applying this results pointwise in $x$ and $t$ shows that Condition C\ref{C1} is equivalent to \eqref{eq:CCM_rhoform} with strict inequality, and therefore sufficient for non-strict inequality.
%\end{proof}
One can then construct the differential feedback gain $K(x,t) = -\frac{1}{2}\rho(x,t)B(x,t)'M(x,t)$ which satisfies \eqref{eq:ccm_explicit} and is always path integrable since it is independent of $u$.

\begin{remark}\label{rem:rho_gm}
If \eqref{eq:CCM_rhoform} holds for some multiplier $\rho(x,t)=\rho_0(x,t)$, then it clearly holds for any $\rho(x,t)\ge \rho_0(x,t)$ for all $x,t$ since $MBB'M\ge 0$. This can be interpreted as the differential feedback having infinite up-side gain margin, and also implies that one can construct a smooth $\rho(x,t)$.
\end{remark}

%%% Local Variables: 
%%% mode: latex
%%% TeX-master: "CCM_Journal"
%%% End: 

\subsection{Dual Metrics and Convexity of Synthesis}\label{sec:dual}

It is known that the search for a CLF for a linear system is convexified by a simple change of variables, leading to an LMI representation of stabilizability \cite{boyd1994linear, dullerud2000course}.  In this section we show that essentially the same transformation makes the search for a CCM convex.

Consider the change of variables $\eta = M(x,t)\delta_x$ and $W(x,t)=M(x,t)^{-1}$. This is related to the ``musical isomorphism" to the dual space of cotangent vectors, and the function $\eta'W\eta$ is the Fenchel dual of $\delta_x'M\delta_x$, so we refer to $W$ as a {\em dual CCM}.
 Under this change of variables, the CCM condition \eqref{eq:CCM} now states that $\eta'(-\dot W + AW +WA'+2\lambda W)\eta < 0$ whenever  $\eta'B=0$, which can be written as
 \begin{equation}
\label{eq:weak_kernel}
B_\perp'\left(-\dot W + AW +WA'+2\lambda W\right)B_\perp< 0,
\end{equation}
for all $x, u, t$, where $B_\perp'(x,t)$ is any matrix function satisfying $B_\perp'B=0$ for all $x,t$. Since differentiation is a linear operation, the inequality \eqref{eq:weak_kernel}  linear (and hence convex) in the unknown matrix function $W$.

One can search directly for differential feedback $\delta_u = K(x,u,t)\d_x$ by way of $W$ and $Y(x,u,t)\in R^{m\times n}$ satisfying
\begin{equation}\label{eq:CCM_Yform}
-\dot W + \pder[f]{x}W +W\pder[f]{x}' + BY + Y'B'+2\lambda W< 0,
\end{equation}
giving the differential feedback gain  $K = YW^{-1}$. If $Y$, and hence $K$, are at most affine in $u$ then the resulting differential control will be path-integrable, similarly to Lemma \ref{lem:lVdecr}.

Condition C\ref{C1} can be written similarly to \eqref{eq:weak_kernel}, and by Finsler's theorem is equivalent to the existence of a scalar function $\rho(x,t)$ satisfying the inequality
\begin{equation}\label{eq:CCM_rhoform_dual}
-\pder[W]{t}-\partial_f W + \pder[f]{x}W +W\pder[f]{x}' -\rho BB'+2\lambda W< 0.
\end{equation}
which is jointly convex in $W$ and $\rho$, and gives an explicit construction of a differential feedback gain $K = -\frac{1}{2}\rho B'W^{-1}$.

%A third, and more explicit, form is to search for $W$ and $Y(x,t)\in R^{m\times n}$ satisfying
%\begin{equation}\label{eq:CCM_Yform}
%-\dot W + \pder[f]{x}W +W\pder[f]{x}' + BY + Y'B'+2\lambda W\le 0,
%\end{equation}
%giving the differential feedback gain  $K = YW^{-1}$. 
%If written with strict inequalities than all three of \eqref{eq:CCM_kernelform}, \eqref{eq:CCM_rhoform_dual}, and \eqref{eq:CCM_Yform} are equivalent: \eqref{eq:CCM_kernelform}$\Leftrightarrow$\eqref{eq:CCM_rhoform_dual} is from Finsler's theorem, \eqref{eq:CCM_rhoform_dual}$\Rightarrow$\eqref{eq:CCM_Yform} via $Y = -\frac{1}{2}\rho B'$, \eqref{eq:CCM_Yform} $\Rightarrow$\eqref{eq:CCM_kernelform} since $B_\perp'BY=0$.
Condition C\ref{C2} also transforms to a linear constraint on $W$:
$\partial_{b_i}W - \pder[b_i]{x}W-W\pder[b_i]{x}'=0.
$

The above conditions are all convex but infinite-dimensional: they are inequalities that must hold over all $x\in\R^n$ and $t\in\R^+$, and the decision variables are sets of smooth matrix functions. Finite-dimensional LMI approximations can be constructed by building $W$ and $\rho$ or $Y$ as linear combinations of a finite basis set (e.g. polynomials up to some order), and verifying the inequalities either by gridding over states and times, or by the sum-of-squares relaxation \cite{parrilo2003semidefinite}.

\begin{remark}\label{rem:integrability}
Note that {\em complete integrability} of $k_\delta(x,\delta_x,t)=K(x,t)\delta_x$ could be imposed by requiring that each row of $K$ satisfies the Schwarz condition, i.e. $\pder[K_{i,j}]{x_k}=\pder[K_{i,k}]{x_j}$. While this constraint is linear and hence convex in $K$, it is {\em not} convex jointly in the decision variables $W, Y$ for \eqref{eq:CCM_Yform}, since $K=YW^{-1}$, or the decision variables $W, \rho$ for \eqref{eq:CCM_rhoform_dual}, since $K = -\frac{1}{2}\rho B'W^{-1}$. This is essentially the same problem as the well-known non-convexity of structured feedback synthesis for linear systems e.g. static output feedback \cite{syrmos1997static}.
\end{remark}

\section{Properties of Control Contraction Metrics}\label{sec:properties}
\subsection{Riemannian Energy as a CLF}

The proof of Theorem \ref{thm:CCM} uses an explicit construction of a particular stabilizing controller, but in doing so we have actually shown that the Riemannian energy $E(x, x^\star,t)$ can always be decreased, and hence be used as control Lyapunov function (CLF) for {\em any} target trajectory of the system. 

The formula for first variation of energy \cite[p. 195]{docarmo1992riemannian} gives a particularly convenient expression for the time derivative of the energy functional as an affine function of $u$:
\begin{align}
\frac{1}{2}\frac{d}{dt}E(x, x^\star, t) =& \langle \gamma_s(t,0), \dot x^\star \rangle_{x^\star,t} - \langle \gamma_s(t,1), f(x,t)\rangle_{x,t}\notag\\
&-\langle \gamma_s(t,1),B(x,t)u\rangle_{x,t}+\frac{1}{2}\pder[E]{t},\label{eq:Edot}
\end{align}
When $x(t)\in \mathfrak C(x^\star,t)$ the above formula still holds with $=$ replaced by $\le$ and $\frac{d}{dt}$ replaced by the Dini derivative. %The control contraction metric condition then implies the Artstein-Sontag CLF condition that if $\gamma_s(t,1)'M(x,t)B(x,t)=0$ then $\langle \gamma_s(t,0), \dot x^\star \rangle_{x^\star,t} - \langle \gamma_s(t,1), f(x,t)\rangle_{x,t}+\frac{1}{2}\pder[E]{t}< -\lambda E(x, x^\star,t)$.

In proving Theorem 1, we have also proven that for any $x^\star, u^\star, t$, the convex set (either a half-space or all of $\R^m$):
\[
\mathcal U =\left\{u\in\ R^m : \ddt E(x, x^\star, t) \le - 2\lambda E(x,x^\star, t) \right\},
%&+\langle \gamma_s(t,0), \dot x^\star \rangle_{x^\star,t} - \langle \gamma_s(t,1), f(x,t)\rangle_{x,t}+\frac{1}{2}\pder[E]{t}
\]
%This also makes precise the intuitive notion that the controller should push the state towards $x^\star$, where ``towards'' is defined by the direction in which a minimal geodesic departs from $x$. 
where $ \ddt E(x, x^\star, t)$ is given by \eqref{eq:Edot},  is always non-empty.

This opens up the possibility of using many other particular controllers based on CLFs that may have further desirable properties. For example, pointwise min-norm control \cite{freeman2008robust}:
$
u(t) = \arg\min_{\tilde u\in \mathcal U} \|\tilde u\|^2
$
would have reduced control magnitude, and can be generalized to provide approximate optimality with guaranteed stability \cite{primbs2000receding}.

\subsection{Invariance Under Coordinate Change and Feedback}
Metrics and dual metrics  are tensors: geometrical objects that are `intrinsic'' and have coordinate representations that transform appropriately under smooth coordinate changes. In the following theorem we establish that the CCM criteria are invariant under such coordinate changes and, additionally,  under affine feedback laws. 

\begin{thm}\label{thm:coord}
If the CCM condition \eqref{eq:CCM} (or equivalently \eqref{eq:weak_kernel})  is satisfied for system \eqref{eq:sys}, then \eqref{eq:CCM} and \eqref{eq:weak_kernel} still hold under:
\begin{enumerate}
\item affine feedback transformations $u(x,v) = \alpha(x)+\beta(x)v$ with $\beta$ a smooth non-singular $n\times n$ matrix function.
\item differential coordinate changes $\delta_\xi = \Phi(x)\delta_x$, in which $\Phi(x)$ is a non-singular matrix for all $x$, with the new CCM 
$
	M_\xi(x,t) :=\Psi'(x) M(x,t)\Psi(x)$ and dual CCM $
	W_\xi(x,t):= \Phi(x)  W(x,t) \Phi(x)'
$, where $\Psi(x) = \Phi(x)^{-1}$;
\item coordinate changes	 $\xi = \phi(x)$, $\phi$ a smooth diffeomorphism, with the new CCM and dual CCM $M_\xi, W_\xi$ as above with $\Phi(x) = \pder[\phi]{x}$ evaluated at $x=\phi^{-1}(\xi)$.
\end{enumerate}
\end{thm}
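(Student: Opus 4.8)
The plan is to handle the three cases in the order (1), (2), (3), reducing (3) to (2), and to verify only \eqref{eq:CCM} for the transformed system each time, since \eqref{eq:CCM} and \eqref{eq:weak_kernel} are interchanged by $W=M^{-1}$, which remains a smooth bijection after any of these transformations. Two elementary facts do most of the work: the distribution $\mathrm{span}\{b_1,\dots,b_m\}$ is preserved, so the set of tangent vectors entering the hypothesis ``$\delta_x'MB=0$'' of \eqref{eq:CCM} transforms covariantly; and $\partial_g M$ depends on a vector field $g$ only through its pointwise value, whereas the drift-Jacobian $A$ involves $\partial b_i/\partial x$ and carries the ``connection''-like part of the story. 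Because a Riemannian metric, the distribution $\mathrm{span}\{b_i\}$, and definiteness of a quadratic form are all coordinate-free, invariance under (3) is not surprising in principle; the content is in writing down the transformation laws and keeping the $\dot M$ and $\partial_t M$ terms straight.

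For (1) I would substitute $u=\alpha(x)+\beta(x)v$ into \eqref{eq:sys}, obtaining $\dot x=(f+B\alpha)+(B\beta)v=:\tilde f+\tilde Bv$ with $M$ unchanged. Since $\beta$ is nonsingular, $\tilde B=B\beta$ has the same column span as $B$, hence $\delta_x'M\tilde B=0\iff\delta_x'MB=0$. The differential dynamics of the closed loop (cf. \eqref{eq:diffdyn}) has drift-Jacobian $\tilde A=A(x,u_0,t)+B\,C(x,v)$, where $u_0:=\alpha(x)+\beta(x)v$ is the effective control value and $C$ collects the Jacobians of $\alpha$ and $\beta$; the key point is that this correction is \emph{left-multiplied by} $B$, so it cancels out of $\delta_x'(\tilde A'M+M\tilde A)\delta_x$ whenever $B'M\delta_x=0$. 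Likewise $\dot{\tilde M}=\partial_t M+\partial_{\tilde f+\tilde Bv}M$ equals $\partial_t M+\partial_{f+Bu_0}M$ pointwise, because $(\tilde f+\tilde Bv)(x)=f(x)+B(x)u_0$ and $\partial_g M$ sees only $g$'s value. Thus at each $(x,v,t)$ the transformed instance of \eqref{eq:CCM} is exactly the original instance at $(x,u_0,t)$, which holds by hypothesis for all $u_0\in\R^m$; quantifying over $v$ gives (1). The $\partial_t M$ term is untouched since $\alpha,\beta$ do not depend on $t$.

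For (2), with $\delta_\xi=\Phi(x)\delta_x$ and $\Psi=\Phi^{-1}$, differentiating along solutions of \eqref{eq:sys} gives $\dot\delta_\xi=(\dot\Phi+\Phi A)\Psi\,\delta_\xi+\Phi B\,\delta_u$, so $A_\xi=\Phi A\Psi+\dot\Phi\Psi$ and $B_\xi=\Phi B$, with $\dot\Phi=\partial_{f+Bu}\Phi$. First, $\delta_\xi'M_\xi B_\xi=\delta_x'\Phi'(\Psi'M\Psi)\Phi B=\delta_x'MB$ since $\Psi\Phi=I$, so orthogonality is preserved exactly. Then I would expand $\delta_\xi'(\dot M_\xi+A_\xi'M_\xi+M_\xi A_\xi+2\lambda M_\xi)\delta_\xi$: the $2\lambda M_\xi$ and $\Phi A\Psi$ pieces reproduce $\delta_x'(A'M+MA+2\lambda M)\delta_x$ plus a cross term $\delta_x'(M\Psi\dot\Phi+\dot\Phi'\Psi'M)\delta_x$ coming from $\dot\Phi\Psi$, while the product rule on $\dot M_\xi=\frac{d}{dt}(\Psi'M\Psi)$, using $\dot\Psi=-\Psi\dot\Phi\Psi$, produces $\delta_x'\dot M\delta_x$ minus exactly that same cross term. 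The $\dot\Phi$ contributions cancel and one is left with $\delta_x'(\dot M+A'M+MA+2\lambda M)\delta_x$, so \eqref{eq:CCM} for $(M_\xi,A_\xi,B_\xi)$ holds iff it holds for the original; the dual form $W_\xi=\Phi W\Phi'$ follows from $(\Psi'M\Psi)^{-1}=\Phi M^{-1}\Phi'$.

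For (3), a diffeomorphism $\xi=\phi(x)$ pushes the system forward to $f_\xi=\Phi f$, $B_\xi=\Phi B$ with $\Phi=\partial\phi/\partial x$, and $\delta_\xi=\Phi\delta_x$ is precisely the differential change of (2) for this particular exact $\Phi$; computing $\dot M_\xi$ and $A_\xi$ intrinsically in the new coordinates returns the same matrices as in (2), because along any solution $\dot\Phi=\frac{d}{dt}\Phi(x(t))=\partial_{f+Bu}\Phi$ and $\frac{d}{dt}M_\xi(\xi(t),t)=\partial_t M_\xi+\partial_{f_\xi+B_\xi u}M_\xi$, so (2) applies verbatim. The main obstacle throughout is bookkeeping rather than ideas: $\dot M$ (resp. $\dot W$) is a Lie-type derivative along the flow, not an algebraic function of $M$, so in (2)--(3) its product-rule expansion must be tracked in lock-step with the $\dot\Phi\Psi$ term in $A_\xi$ for the cancellation to emerge, and in (1) one must notice that the feedback contributes state-dependent coefficients whose Jacobians enter only left-multiplied by $B$, hence are invisible on the slice $B'M\delta_x=0$.
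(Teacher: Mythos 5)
Your proposal is correct and follows essentially the same route as the paper: the feedback correction term enters left-multiplied by $B$ and so vanishes on the orthogonality slice (the paper does this in the dual form with $B_\perp' B\beta = 0$, you in the primal form with $\delta_x'MB=0$, a cosmetic difference), and in part (2) the identity $\dot\Psi=-\Psi\dot\Phi\Psi$ makes the $\dot\Phi$ cross terms in $\dot M_\xi$ cancel against those in $A_\xi'M_\xi+M_\xi A_\xi$, leaving $\Psi'(\dot M+A'M+MA+2\lambda M)\Psi$, with (3) reduced to (2) exactly as in the paper.
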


\begin{remark} If $\Phi(x)$ has bounded singular values over all $x,t$ within $[\sigma_{\min }, \sigma_{\max }]$, then the uniform bounds on $M$ and $W$ are also preserved under coordinate change.
\end{remark}
\subsection{Necessity for Feedback Linearizable Systems}
A corollary of Theorem \ref{thm:coord} is that for feedback linearizable systems, existence of a CCM is guaranteed. A system of the form \eqref{eq:sys}
is feedback linearizable if there exists a change of variables and feedback transformation such that the transformed system is linear time-invariant:
$
\dot \xi = G\xi+Hv,
$
where the pair of constant matrices $(G, H)$ is controllable \cite{khalil2002nonlinear}. 

%When such a transformation can be found a feedback stabilizer can be designed using LTI methods, but a major challenge is that even if such functions can be proven to exist via Frobenius theorem, they may be difficult to construct explicitly: this generally involves solving a partial differential equation. 

\begin{cor} For any feedback linearizable system there is a control contraction metric that verifies universal stabilizability, given by $W(x,t) = \Phi(x,t)P\Phi(x,t)'$ where $P$ is any constant symmetric positive definite matrix satisfying
$
H_\perp(GP+PG')H_\perp'<0.
$
\end{cor}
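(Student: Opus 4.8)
The plan is to reduce the corollary to a purely linear statement and then push the metric back through Theorem~\ref{thm:coord}. By hypothesis there is a diffeomorphism $\xi=\phi(x)$ and an affine feedback $u=\alpha(x)+\beta(x)v$ taking \eqref{eq:sys} to the controllable LTI system $\dot\xi=G\xi+Hv$. Read the other way, $x=\phi^{-1}(\xi)$ together with $v=\beta(x)^{-1}(u-\alpha(x))$ carries the LTI system back to \eqref{eq:sys}: the feedback part is again affine, and the coordinate part is a diffeomorphism whose Jacobian $\Phi(x,t)$ is $\pder[\phi^{-1}]{\xi}$ evaluated at $\xi=\phi(x)$ (this is the $\Phi$ appearing in the statement). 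By Theorem~\ref{thm:coord}(1) the affine feedback leaves both condition \eqref{eq:weak_kernel} and the dual metric unchanged, and by Theorem~\ref{thm:coord}(3) the diffeomorphism sends a constant dual CCM $W_\xi=P$ of the LTI system to the dual CCM $W(x,t)=\Phi(x,t)\,P\,\Phi(x,t)'$ of \eqref{eq:sys}. Hence it suffices to prove two things: (i) every constant $P>0$ with $H_\perp(GP+PG')H_\perp'<0$ is a dual CCM for $\dot\xi=G\xi+Hv$; and (ii) at least one such $P$ exists.

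For (i), specialise \eqref{eq:weak_kernel} to the LTI system: there $\dot W=\pder[P]{t}+\partial_f P=0$, and the matrix $A$ of \eqref{eq:diffdyn} equals $G$ because the terms $\pder[b_i]{x}u_i$ vanish when $B\equiv H$ is constant. So \eqref{eq:weak_kernel} becomes $H_\perp\!\left(GP+PG'+2\lambda P\right)H_\perp'<0$, with $H_\perp$ any matrix satisfying $H_\perp H=0$. Since $P>0$ and the left-hand side is continuous in $\lambda$, the strict inequality $H_\perp(GP+PG')H_\perp'<0$ forces $H_\perp(GP+PG'+2\lambda P)H_\perp'<0$ for all sufficiently small $\lambda>0$; fixing one such $\lambda$ makes $P$ a dual CCM with that rate, and then $W=\Phi P\Phi'$ inherits \eqref{eq:weak_kernel} with the same rate via Theorem~\ref{thm:coord}.

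For (ii), controllability of $(G,H)$ implies stabilisability, so there is a gain $F$ with $G+HF$ Hurwitz, hence (the standard Lyapunov LMI, cf.~\cite{boyd1994linear}) a constant $P>0$ with $(G+HF)P+P(G+HF)'<0$. Pre- and post-multiplying by $H_\perp$ and $H_\perp'$ and using $H_\perp H=0$ to annihilate the $HFP$ and $PF'H'$ terms gives $H_\perp(GP+PG')H_\perp'<0$, as needed. The only remaining subtlety is that ``verifies universal stabilizability'' requires $W$ (equivalently $M$) to be uniformly bounded: this holds provided the singular values of $\Phi$ are bounded away from $0$ and $\infty$ over all of $\R^n$ (the Remark following Theorem~\ref{thm:coord}), which is where a genuinely \emph{global} feedback linearization is used. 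I expect the main obstacle to be purely bookkeeping --- tracking which Jacobian and which direction of the transformation appears, and flagging the uniform-boundedness hypothesis on $\Phi$ --- rather than anything analytically deep, since (i) is a one-line perturbation argument and (ii) is classical.
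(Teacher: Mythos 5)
Your proof is correct and follows exactly the route the paper intends: the paper's own proof is just "immediate from Theorem~\ref{thm:coord}" plus the standard existence of such a $P$ for stabilizable $(G,H)$, and you have filled in precisely those steps (the reduction to the LTI case, the small-$\lambda>0$ perturbation to get a strict contraction rate, the Lyapunov-LMI existence argument, and the direction of the Jacobian). Your explicit flagging of the uniform-boundedness requirement on $\Phi$ is a detail the paper's corollary silently omits, and it is a genuine hypothesis needed for the "universal stabilizability" conclusion.
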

The proof is immediate from Theorem \ref{thm:coord}, and we note that such a $P$ is guaranteed to exist if $(G, H)$ is stabilizable \cite{dullerud2000course}.

The converse is not true: the necessary and sufficient conditions for feedback-linearizability consist of a controllability condition and an involutivity (complete integrability) condition to find an appropriate coordinate change \cite{khalil2002nonlinear}. In contrast, existence of a CCM depends only on stabilizability, and does not require complete integrability of the differential control.  This is because a metric corresponds to a {\em differential} change of coordinates $\delta_z = \Theta(x,t)\delta_x$, i.e. $M = \Theta'\Theta$, but there is no requirement that this should be integrable to an explicit change of coordinates $z=\theta(x)$. %The price paid is that implementation may require on-line optimization to find a geodesic.

\begin{ex}
The following system
\[
\ddt \bm{x_1\\x_2} = \bm{-x_1-x_1^3 +x_2^2\\0}+\bm{0\\1}u=:f(x)+Bu
\]
is not feedback linearizable in the sense of \cite{isidori1995nonlinear}, since the vector fields $B$ and 
$ ad_fB := \pder[f]{x}B-\pder[B]{x}f =[2x_2, 0]' 
$
are not linearly independent when $x_2=0$. However, it is universally stabilizable as verified by \eqref{eq:CCM_rhoform_dual} with $W=I$ and multiplier $\rho(x) =1+2x_2^2$. Additionally, since we can take $B_\perp=[1 \ 0]'$, condition \eqref{eq:weak_kernel} reduces to the fact that $\pder[f_1]{x_2}=-1-3x_1^2<0 \ \forall x$.
\end{ex}
%

%%% Local Variables: 
%%% mode: latex
%%% TeX-master: "CCM_Journal"
%%% End: 

\section{Stability and Stabilization of Submanifolds}\label{sec:flow_inv}

Convergence of a nonlinear system to a submanifold of state space is a requirement that appears in many applications, including coordination of multi-agent systems \cite{tanner2007flocking}, synchronization of oscillators \cite{dorfler2014synchronization}, computational neuroscience \cite{burak2009accurate}, and nonlinear control design
\cite{astolfi2007nonlinear}. Design of controllers to stabilize submanifolds has been investigated recently using transverse feedback linearization \cite{nielsen2008local}, reduction and backstepping
\cite{el2013reduction}, and modifying controllers for drift-free systems \cite{montenbruck2015compensating}. %In this section we show that the CCM conditions can be extended to give convex criteria for stability and stabilizability of submanifolds.

Suppose a submanifold is defined by a level set of some smooth funciton $
Z(t) = \{x:z(x,t) = c\}
$
where $z:\R^n\times\R^+\to \R^q$ and $\pder[z]{x}$ has rank $q$ for all $x$. The sets $Z(t)$ are called 
{\em controlled invariant} if there exists a smooth mapping $u^\star:Z(t)\times \R^+\to \R^m$ such that 
\[\pder[z(x,t)]{t}+\pder[z(x,t)]{x}(f(x,t)+B(x,t)u^\star(x,t) )=0
\]
for all $x\in Z(t)$ for all $t\in\R^+$. In the case of an uncontrolled system, this reduces to the condition for a manifold to be {\em flow invariant}: $\pder[z]{t}+\pder[z]{x}f(x,t)=0$.

The objective is to design a controller guaranteeing exponential convergence to $Z(t)$, i.e. for each $x(0)$ one can construct a control signal $u(t)$ such that the solution $x(t)$ of \eqref{eq:sys} satisfies
$
\inf_{y\in Z(t)} |x(t)-y|\le e^{-\lambda t} \bar R
$ for some $\bar R>0$.

Assume we can construct a smooth matrix function $G(x,t)$ with columns that form a basis for the null space of $\pder[z(x,t)]{x}$. In order to study stability and stabilization of $Z(t)$, we construct a ``virtual control system'':
\begin{equation}\label{eq:virt}
\dot x = f(x,t)+\bar B(x,t)\bar u
\end{equation}
where $\bar B(x,t) = [B(x,t) \  G(x,t)]$ and $\bar u = [u' \ v']'$, with $u$ the actual control input and $v$ a newly introduced ``virtual control''.

\begin{thm}\label{thm:manifold_stab}
If there exists a CCM for the virtual control system \eqref{eq:virt} satisfying the strong conditions C\ref{C1}, C\ref{C2}, then any time-varying submanifold of the form $Z(t) = \{x:z(x,t) = c\}$ can be exponentially stabilized (open-loop, sampled-data, or continuously almost everywhere) with rate $\lambda$.
\end{thm}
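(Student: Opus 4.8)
The plan is to reduce the statement to the controller constructions already used for Theorem~\ref{thm:CCM}, but applied to the virtual system \eqref{eq:virt} and with the squared Riemannian distance to the submanifold as control Lyapunov function. Write $d_Z(x,t):=\inf_{y\in Z(t)}d(x,y,t)$ and $E_Z(x,t):=d_Z(x,t)^2$. Since $Z(t)$ is controlled invariant and (as a regular level set) closed, the Hopf--Rinow argument (Lemma~\ref{lem:hopf_rinow}) gives, for each $x$ and $t$, a foot point $x^\star(t)\in\arg\min_{y\in Z(t)}E(x(t),y,t)$ and a minimizing geodesic $\gamma(t,\cdot)$ from $x^\star(t)$ to $x(t)$. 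The first task is to show that $E_Z$ can always be made to decay at rate $2\lambda$ using only the genuine input $u$.

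The geometric heart of the argument is two $M$-orthogonality facts about this geodesic. First, as for the nearest point on any submanifold, $\gamma_s(t,0)$ is $M$-orthogonal to $T_{x^\star}Z(t)=\mathrm{span}\,G(x^\star,t)$. Second --- and this is exactly where condition C\ref{C2} for \eqref{eq:virt} is needed rather than just \eqref{eq:CCM} --- the columns of $G$ are Killing fields for $M$ and lie in $\ker(\partial z/\partial x)=TZ(t)$, so they generate isometries that fix each $Z(t)$ setwise; such an isometry carries $\gamma$ to the minimizing geodesic joining $x^\star$ to the translated endpoint, and since distances to $Z(t)$ are preserved, the first-variation-of-energy formula forces $\langle\gamma_s(t,1),G(x,t)w\rangle_{x,t}=0$ for every $w$. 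Feeding these into the analog of \eqref{eq:Edot} for the moving submanifold $Z(t)$ (in which the tangential part of the combined motion of $x$, of the foot point, and of $Z(t)$ itself is annihilated by the $s=0$ orthogonality --- the same envelope phenomenon noted after \eqref{eq:Edot}), one gets an expression for $\tfrac12\tfrac{d}{dt}E_Z(x,t)$ that is affine in $\bar u=[u'\ v']'$ but whose coefficient of the virtual input $v$, namely $\langle\gamma_s(t,1),G(x,t)\cdot\rangle_{x,t}$, vanishes identically. Hence $\tfrac{d}{dt}E_Z(x,t)$ depends only on the genuine control $u$, and $v$ need never be realized. Because \eqref{eq:virt} admits a CCM satisfying C\ref{C1} --- equivalently, by Proposition~\ref{prop:rho}, a multiplier $\rho$ and a control-independent differential feedback --- integrating the CCM inequality along $\gamma$ exactly as in the proof of Theorem~\ref{thm:CCM} and Section~\ref{sec:properties} shows that the set of admissible $u$, those with $\tfrac{d}{dt}E_Z(x,t)\le-2\lambda E_Z(x,t)$, is a nonempty half-space (or all of $\R^m$).

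The three controller variants then follow the template of Theorem~\ref{thm:CCM}. For the open-loop and sampled-data versions one freezes, at each (re-)initialization time $t_i$, a minimizing geodesic $\gamma_i$ from $Z(t_i)$ to $x(t_i)$, propagates its forward image under \eqref{eq:virt} with the $s=0$ endpoint kept on $Z(t)$ by controlled invariance (virtual input $v\equiv0$ there) and the genuine control read off at $s=1$; the Riemannian length of the connecting curve shrinks like $e^{-\lambda(t-t_i)}$, and re-initializing can only decrease $d_Z$. For the continuous version one picks at each $t$ an admissible $u$ (e.g.\ the pointwise min-norm choice) associated with the current minimizing geodesic; this is smooth off the cut locus of $Z(t)$ and continuous on $Z(t)$. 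In all cases $E_Z(x(t),t)\le e^{-2\lambda t}E_Z(x(0),0)$, and the uniform bounds $\alpha_1 I\le M\le\alpha_2 I$ convert this to $\inf_{y\in Z(t)}|x(t)-y|\le\sqrt{\alpha_2/\alpha_1}\,e^{-\lambda t}\inf_{y\in Z(0)}|x(0)-y|$, which is the asserted exponential convergence with rate $\lambda$ and overshoot proportional to the initial distance to $Z(0)$.

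I expect the main obstacle to be making the first-variation computation rigorous for the \emph{time-varying} submanifold: establishing the second orthogonality fact cleanly from the Killing property of $G$ (hence the necessity of C\ref{C2}), and correctly separating the tangential and transverse parts of the simultaneous motion of $x(t)$, of the foot point, and of $Z(t)$ and $M$ with time. The remaining technical points --- smoothness of a basis $G$ for $\ker(\partial z/\partial x)$ (assumed), that the cut locus of $Z(t)$ has zero measure and the continuous feedback is well defined and smooth off it, and a uniform-rank hypothesis on $\partial z/\partial x$ so that $d_Z$ is comparable to the Euclidean distance to $Z(t)$ --- are dealt with exactly as the analogous points for Theorem~\ref{thm:CCM}, including the zero-measure assumption on the set of times at which $x(t)$ lies in the cut locus.
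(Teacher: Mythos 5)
Your continuous-feedback argument via the point-to-set energy $E_Z$ is a legitimate CLF-style reformulation, and your two orthogonality facts are correct --- indeed, since C\ref{C2} makes the columns of $G$ Killing fields, the quantity $\langle\gamma_s(s),G(\gamma(s))w\rangle_{\gamma(s),t}$ is \emph{conserved} along any geodesic, so its vanishing at the foot point already forces it to vanish for all $s$, not only at $s=1$. But there is a genuine gap at the crux. First, the step you defer --- the first variation of $E_Z(x(t),t)$ for the \emph{moving} set $Z(t)$, accounting simultaneously for the normal velocity of $Z(t)$, the motion of the foot point, and $\partial M/\partial t$ --- is essentially the content of the theorem, and ``integrating the CCM inequality along $\gamma$ as in Theorem~\ref{thm:CCM}'' does not close it: that integration requires evolving the whole connecting path under a closed-loop \emph{differential} feedback for \eqref{eq:virt}, whose virtual component $\delta_v=-\tfrac12\rho\, G'M c_s$ is nonzero at interior $s$ once the path is no longer a minimizing geodesic from the foot point. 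Second, and more concretely, your open-loop/sampled-data construction is inconsistent as stated: you impose $v\equiv 0$ at $s=0$ (to invoke controlled invariance) \emph{and} demand that only the genuine control act at $s=1$; but the path of controls is obtained by integrating $(\delta_u,\delta_v)$ in $s$, so $v$ can be pinned to zero at only one endpoint, and for $t>t_i$ the forward image $c(t)$ is not a geodesic emanating orthogonally from $Z(t)$, so $\int_0^1\delta_v\,ds\neq 0$ in general. With your anchoring, the $s=1$ endpoint would require the unrealizable input $Gv(t,1)$ and would cease to coincide with the true state $x(t)$.

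The paper's proof supplies exactly the missing device: a ``shadow state'' $\bar x(t)=c(t,0)$ together with the asymmetric boundary conditions $u(t,s)=u^\star(\bar x,t)+\int_0^s\delta_u\,ds$ and $v(t,s)=-\int_s^1\delta_v\,ds$. Then $v(t,1)=0$, so the $s=1$ endpoint obeys the real dynamics and equals $x(t)$, while at $s=0$ the (generally nonzero) virtual input enters only through $G$, which is tangent to the level sets of $z$, so controlled invariance keeps $\bar x(t)\in Z(t)$; the CCM for \eqref{eq:virt} then makes $L(c(t))$ decay exponentially, giving convergence of $x(t)$ to $Z(t)$ without any foot-point or envelope computation. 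You should either adopt this construction, or carry out your first-variation argument in full --- using the conservation of $\langle\gamma_s,Gw\rangle$ along geodesics to show $\delta_v\equiv0$ on the instantaneous minimizing geodesic, which would rescue the continuous-feedback case but still leaves the open-loop and sampled-data claims of the statement unproved.
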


The proof uses the concept of a ``shadow state'' $\bar x(t)$, which has the property that $\bar x(t) \in Z(t)\, \forall t$  and can be thought of as a generalised projection of $x(t)$ onto $Z(t)$. The virtual control system \eqref{eq:virt} is constructed so that it can represent dynamics of the real system when $v=0$, but can also represent $\bar x(t)\in Z(t)$ when $u=x^\star(\bar x,t)$ and $v$ is arbitrary. The idea is to ensure $x(t)$ converges to $\bar x(t)$, and therefore to $Z(t)$.

The following corollary gives simple convex criteria for an uncontrolled system to converge to a submanifold. 
\begin{cor}\label{thm:synch} Consider an uncontrolled system of the form \eqref{eq:sys} with $B=0 \ \forall x,t$.
Suppose there exists a uniformly bounded dual metric $W$, invariant on level sets of $z(x,t)$, satisfying:
\begin{equation}\label{eq:CCM_Z}
\pder[z]{x}\left(-\pder[W]{t}-\partial_fW + \pder[f]{x}W+W\pder[f]{x}'+2\lambda W\right)\pder[z]{x}'< 0,
\end{equation}
Then all solutions of the system converge exponentially with rate $\lambda$ to the set $Z(t)$.
\end{cor}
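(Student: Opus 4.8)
The plan is to recognise Corollary~\ref{thm:synch} as the specialisation of Theorem~\ref{thm:manifold_stab} to the actuator-free case $B=0$, and to verify that the two hypotheses stated here are precisely the hypotheses of that theorem in this setting.

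First I would set up the virtual system. With $B=0$ the virtual control system \eqref{eq:virt} reduces to $\dot x = f(x,t)+G(x,t)v$, i.e.\ $\bar B = G$, and the controlled-invariance requirement of Theorem~\ref{thm:manifold_stab} degenerates to flow-invariance of $Z(t)$ (take $u^\star\equiv 0$). Note that since $\pder[z]{x}G\equiv 0$, any solution of $\dot x = f+Gv$ started in $Z(t)$ remains in $Z(t)$ for \emph{arbitrary} $v$, which is exactly the structural property that the shadow-state argument behind Theorem~\ref{thm:manifold_stab} exploits. Since the columns of $G$ form a basis for $\ker\pder[z]{x}$ and $\pder[z]{x}$ has full row rank $q$, the rows of $\pder[z]{x}$ span the left-annihilator of $G$, so $B_\perp' = \pder[z]{x}$ is an admissible choice of annihilator for the virtual system.

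Next I would match the two hypotheses to conditions C\ref{C1} and C\ref{C2}. Since C\ref{C1} involves only the drift Jacobian $\pder[f]{x}$, and since the constraint $\delta_x'MG = 0$ becomes, in the dual variables $\eta = M\delta_x$ with $W=M^{-1}$, the condition $\eta'G=0$ whose general solution is $\eta = \pder[z]{x}'\xi$, substituting into the dual form of C\ref{C1} yields exactly inequality \eqref{eq:CCM_Z} (equivalently, by Finsler's theorem and Proposition~\ref{prop:rho}, the $\rho$-form \eqref{eq:CCM_rhoform_dual}). On the other side, the level sets of $z$ are the integral manifolds of the involutive distribution spanned by the columns $g_i$ of $G$, so the statement that $W$ is \emph{invariant on level sets of $z$} means that the Lie derivative of the contravariant tensor $W$ along each $g_i$ vanishes, $\partial_{g_i}W - \pder[g_i]{x}W - W\pder[g_i]{x}' = 0$, which is precisely condition C\ref{C2} written for the dual metric and the input vector fields of the virtual system. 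Uniform boundedness of $W$ then gives uniform boundedness of $M=W^{-1}$, so Theorem~\ref{thm:manifold_stab} applies; it produces a controller driving $x(t)$ to $Z(t)$ exponentially with rate $\lambda$, and since $B=0$ this controller applies no control, so the conclusion reduces to the assertion that every solution of $\dot x = f(x,t)$ converges exponentially to $Z(t)$.

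I expect the main obstacle to be this last identification: making rigorous, and basis-independent, the equivalence of "$W$ invariant on level sets of $z$" with condition C\ref{C2}, i.e.\ that each basis field of the tangent distribution of the foliation is Killing for $M$. This should be phrased either as a property of a fixed choice of $G$ or as a genuinely leaf-intrinsic condition on the restriction of $W$, since a function-coefficient combination $G\mapsto GT(x,t)$ of Killing fields need not be Killing; the point needs to be stated explicitly rather than glossed over. The remaining items — admissibility of $B_\perp' = \pder[z]{x}$, the sign bookkeeping in $\dot W = \pder[W]{t}+\partial_f W$, and the fact that with $B=0$ the open-loop, sampled-data and continuous variants of Theorem~\ref{thm:manifold_stab} all collapse to applying no input — are routine.
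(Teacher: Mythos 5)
Your proposal is correct and is exactly the argument the paper intends: the corollary is stated as an immediate specialisation of Theorem~\ref{thm:manifold_stab} to $B=0$ (the paper gives no separate proof, only the remark that existence of the stabilizing controller suffices since no control need actually be applied), with $B_\perp'=\pder[z]{x}$ annihilating $\bar B=G$ so that \eqref{eq:CCM_Z} is C\ref{C1} and the Killing-field reading of ``invariant on level sets of $z$'' is C\ref{C2}. Your flagging of the basis-dependence of the Killing condition and of the implicit flow-invariance of $Z(t)$ is a useful clarification beyond what the paper states.
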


Note that this corollary is based solely on the {\em existence} of a universally stabilizing controller. Actual computation of a control signal is not required.
Corollary \ref{thm:synch} generalizes results on {\em partial contraction} in \cite{wang2005partial, Pham2007}. The latter showed that convergence to a {\em linear} manifold defined by $z(x)=Vx = 0$, with $V$ a constant matrix, is guaranteed by the condition that $V(\pder[f]{x}+\pder[f]{x}')V'$ is uniformly negative definite. 
A similar notion of {\em horizontal contraction} was studied in \cite{forni2014differential}. 

%%% Local Variables: 
%%% mode: latex
%%% TeX-master: "CCM_Journal"
%%% End: 
\section{Illustrative Example}\label{sec:examples}

One of the advantages of convex criteria is that it is possible to mix and match different design objectives for one controller. In \cite{andrieu2010uniting} and references therein, the problem of ``uniting'' locally optimal and globally stabilizing control was considered. This problem is non-trivial in a Lyapunov framework since the set of control Lyapunov functions for a system is non-convex, but in the CCM framework it is straightforward. Let us illustrate this with a particular example system taken from \cite{andrieu2010uniting}, with state $x = [x_1, x_2, x_3]'$ and dynamics \eqref{eq:sys} with
\begin{equation}\label{eq:Andrieu}
f(x) = \begin{bmatrix}-x_1+x_3\\x_1^2-x_2-2x_1x_3+x_3\\-x_2 \end{bmatrix}, \quad B = \bm{0\\0\\1}.
\end{equation}
%Note the unbounded nonlinearity appearing in the dynamics of the state $x_2$, which is not directly actuated. 
Note that this system is not feedback linearizable, since the matrix $[B, ad_fB, ad_f^2B]$ drops rank at the origin.

%For this system we have differential dynamics \eqref{eq:diffdyn} with
%\[
%\pder[f]{x} = \begin{bmatrix} 
%-1&0&1\\
%2x_1-2x_3&-1&1-2x_1\\
%0&-1&0
%\end{bmatrix}.
%\]
We first solve the linear quadratic regulator (LQR) problem for the system linearized at the origin with cost function $\int_0^\infty (x'x+ru^2)dt$ with $r=1$, obtaining a solution $P=P'>0$ of the algebraic Riccati equation, and the locally optimal controller $u = -r^{-1}B'Px$. Then we can search for $W$ and $\rho$ satisfying \eqref{eq:CCM_rhoform_dual} and the additional linear constraints $W(0) = P^{-1}$ and $\rho(0) = 2r^{-1}$, so that locally the LQR and CCM controllers are the same. To satisfy Condition C\ref{C2}, entries of $W$ were allowed to be quadratic functions of $x_1$ and $x_2$. The resulting metric is not uniformly-bounded, but still satisfies the  conditions of Lemma \ref{lem:hopf_rinow}. Similarly, $\rho$ was a quadratic polynomial in $x_1$, and $\lambda = 0.5$. 
As an optimization objective we chose the $l^1$ norm of the coefficients of the polynomial entries of $W$ to encourage sparsity. 
The metric was found using sum-of-squares programming \cite{parrilo2003semidefinite} via the parser Yalmip \cite{lofberg2004yalmip} and solver Mosek, the resulting semidefinite program took about 0.4 seconds to solve on a standard desktop computer. 

It can be seen in Fig. \ref{fig:A1} that for small initial conditions the CCM controller and the LQR are virtually identical. This is because the minimal geodesic is close to a straight line and $\rho$ and $W$ are almost unchanged, so the CCM control law approximates a simple linear feedback on $x-x^\star$. In contrast, for larger initial conditions the LQR controller was not stabilizing, while the CCM controller was. Simulations under LQR diverge rapidly after about 2 seconds. Further results on CCMs for poblems  in robotics can be found in \cite{CCM_ISRR}, and a method for computing geodesics can be found in \cite{leung2017}.

\begin{figure}
\begin{center}
\includegraphics[width=0.5\columnwidth]{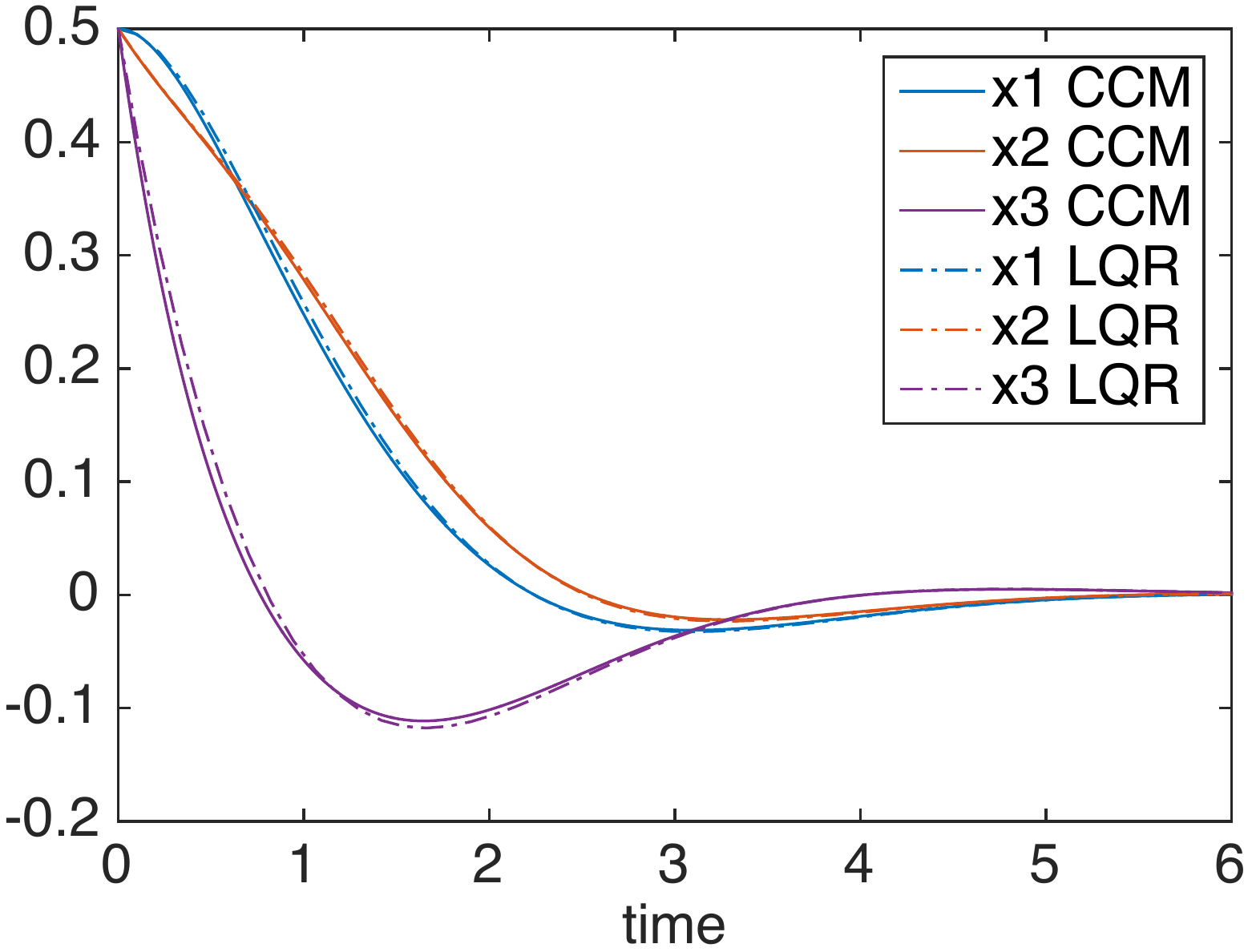}\includegraphics[width=0.5\columnwidth]{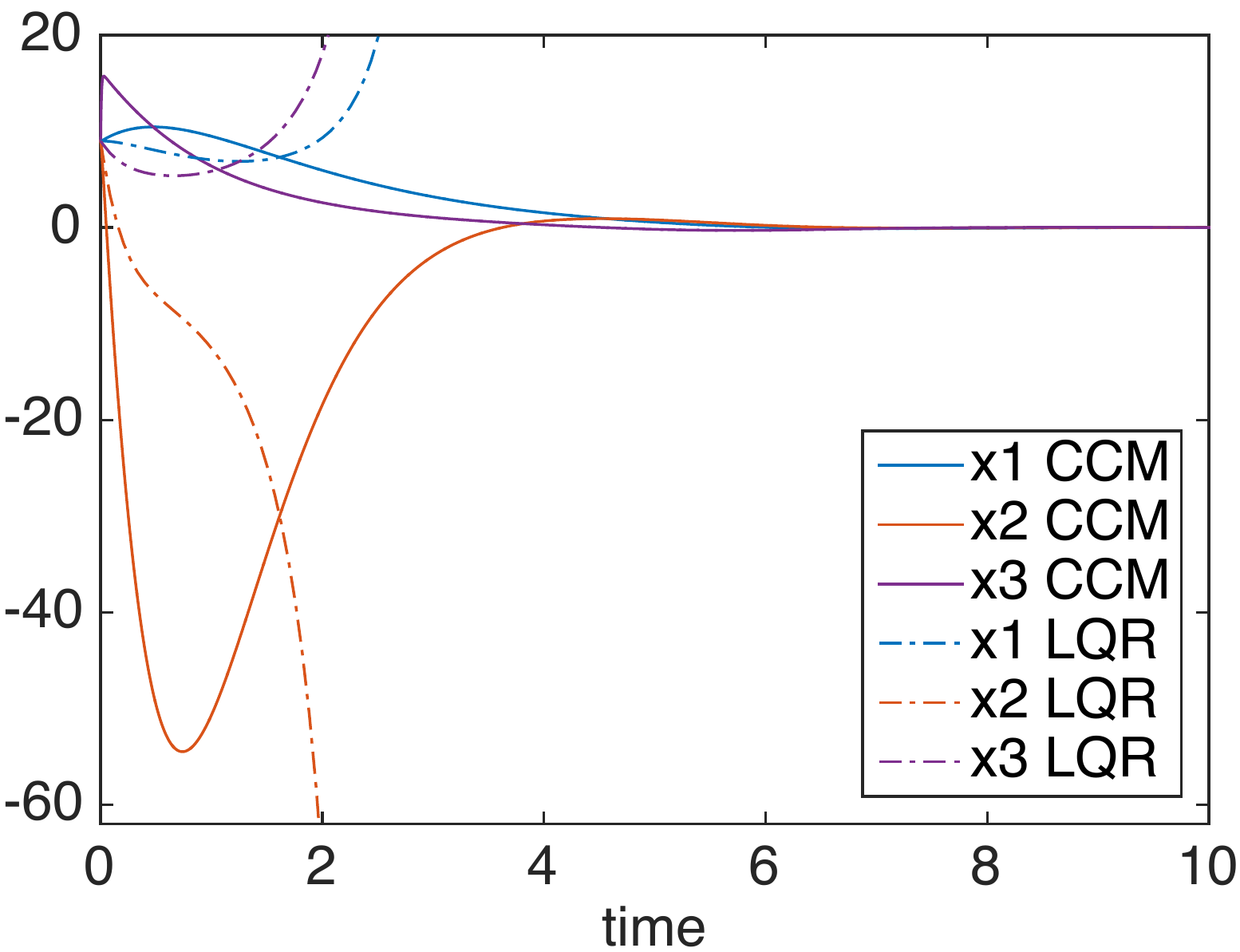}
\caption{Response of System \eqref{eq:Andrieu} with CCM and LQR control to initial state $x(0) = [0.5, 0.5, 0.5]'$ (left) and $x(0) = [9, 9, 9]'$ (right). This exhibits the ``locally optimal'' and ``globally stabilizing'' behaviour of the CCM controller.
}
\label{fig:A1}
\end{center}
\end{figure}

\appendix
\begin{proof}[Proof of Proposition 1] 
By assumption that the closed-loop system is strictly  contracting, we have $\ddt (\delta_x'M(x,t)\delta_x)< -2\lambda \delta_x'M(x,t)\delta_x$ for all $x,v,t$ and $\delta_x\ne 0$. Expanding the left hand side $\ddt (\delta_x'M(x,t)\delta_x)=
 \delta_x'\left[\pder[M]{t} +\partial_fM + \left(\pder[f]{x}+BK\right)'M+M\left(\pder[f]{x} + BK\right)\right.\notag\\
+\left.\sum_{i=1}^m (k_i(x,t)+v_i) \left(\partial_{b_i}M+\pder[b_i]{x}'M+M\pder[b_i]{x}\right)\right]\delta_x.\notag
$
Note that this is affine in $v_i$, so if this to be bounded for all $v\in\R^m$, it is clearly required that $
\partial_{b_i} M + \pder[b_i]{x}'M+M\pder[b_i]{x} = 0
$ for each $i$. Furthermore, if $\delta_x'MB=0$ then terms involving $K$ vanish and hence $\delta_x'\left(\pder[M]{t}+\partial_f M+\pder[f]{x}'M+M\pder[f]{x}+2\lambda M\right)\delta_x <0$, and the result follows from direct calculation of $A$ and $\dot M$.
\end{proof}

%%% Local Variables: 
%%% mode: latex
%%% TeX-master: "CCM_Journal"
%%% End: 

\begin{lem} \label{lem:hopf_rinow}Suppose a dual metric $W(x)=M(x)^{-1}$ satisfies a quadratic bound on its largest eigenvalue: $\lambda_{\max}(W(x)) \le |\mf Ax+\mf B|^2$ for all $x\in\R^n$, for some fixed matrices $\mf A,\mf B$, where $|\cdot|$ is the Euclidean norm. Then there exists a minimal geodesic between any pair of points in $\R^n$.
\end{lem}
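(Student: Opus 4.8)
The plan is to verify the completeness hypothesis of the Hopf-Rinow theorem by showing that closed balls of the Riemannian distance $d$ induced by $M$ are compact subsets of $\R^n$; the quadratic eigenvalue bound is exactly what is needed to control how $d$ degenerates at infinity. First I would convert the hypothesis into a pointwise comparison with the Euclidean metric. Since $W(x)=M(x)^{-1}$ we have $\lambda_{\max}(W(x))=\lambda_{\min}(M(x))^{-1}$, so the assumption reads $\lambda_{\min}(M(x))\ge 1/g(x)^2$ with $g(x):=|\mf Ax+\mf B|$; in particular $g(x)>0$ for every $x$ (since $W(x)>0$ forces $\lambda_{\max}(W(x))>0$), which also forces $\mf B\ne 0$ when $\mf A=0$. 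Hence, using $g(x)\le\|\mf A\|\,|x|+|\mf B|$, for every $x$ and every tangent vector $\delta$,
\[
\|\delta\|_x^2=\delta'M(x)\delta\ \ge\ \frac{|\delta|^2}{g(x)^2}\ \ge\ \frac{|\delta|^2}{(\alpha|x|+\beta)^2},\qquad \alpha:=\|\mf A\|,\ \ \beta:=|\mf B|>0.
\]

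Next I would bound the length of an arbitrary curve from below in terms of the Euclidean displacement of its endpoints. Fix a basepoint $x_0$, take any piecewise-smooth $c\in\Gamma(x_0,x)$, and set $\rho(s):=|c(s)-x_0|$, a continuous, piecewise-$C^1$ function with $\rho(0)=0$, $\rho(1)=|x-x_0|$ and $|\rho'(s)|\le|c_s(s)|$ wherever $c$ is differentiable. Using $|c(s)|\le\rho(s)+|x_0|$ and writing $\gamma:=\alpha|x_0|+\beta>0$,
\[
L(c)=\int_0^1\|c_s\|_{c}\,ds\ \ge\ \int_0^1\frac{|c_s(s)|}{\alpha|c(s)|+\beta}\,ds\ \ge\ \int_0^1\frac{|\rho'(s)|}{\alpha\rho(s)+\gamma}\,ds\ \ge\ \frac1\alpha\log\!\Big(1+\frac{\alpha}{\gamma}|x-x_0|\Big),
\]
the last step by the fundamental theorem of calculus applied to the antiderivative $s\mapsto\frac1\alpha\log(\alpha\rho(s)+\gamma)$ (summed over the smooth pieces of $c$, using continuity of $\rho$ at the breakpoints); when $\alpha=0$ the same computation gives $L(c)\ge|x-x_0|/\gamma$. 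Taking the infimum over $c$ yields $d(x_0,x)\ge\psi(|x-x_0|)$, where $\psi$ is continuous, strictly increasing, and $\psi(r)\to\infty$ as $r\to\infty$.

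Finally I would conclude. This estimate shows the closed $d$-ball $\{x:d(x_0,x)\le r\}$ lies inside the Euclidean ball of radius $\psi^{-1}(r)$ about $x_0$, hence is Euclidean-bounded; and since $M$ is smooth and positive definite, $d(x_0,\cdot)$ is continuous for the usual topology (on any compact set it is two-sidedly comparable to $|\cdot|$), so this ball is Euclidean-closed, hence compact. Thus closed, bounded subsets of the metric space $(\R^n,d)$ are compact, the hypotheses of the Hopf-Rinow theorem are met, and $\R^n$ equipped with this metric is geodesically complete, so a minimizing geodesic joins every pair of points.

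The one delicate step is the curve-length estimate: keeping track of the merely piecewise-smooth regularity of $c$ (so that $\rho$ is only piecewise-$C^1$ and $|\rho'|\le|c_s|$ holds only almost everywhere) and treating the degenerate case $\mf A=0$ (linear rather than logarithmic growth of $\psi$) in a uniform way. The eigenvalue bookkeeping and the appeal to Hopf-Rinow are routine.
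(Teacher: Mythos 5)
Your proof is correct, but it verifies a different clause of the Hopf--Rinow theorem than the paper does, via a different key estimate. The paper establishes \emph{geodesic completeness} directly: using the constant-speed property $\gamma_s'M(\gamma)\gamma_s=c$ of geodesics together with $\lambda_{\min}(M(x))\ge|\mf Ax+\mf B|^{-2}$, it gets $|\gamma_s|\le\sqrt{c}\,|\mf A\gamma+\mf B|$, a linear differential inequality ruling out finite escape of the geodesic ODE, so every geodesic extends to all of $\R$. You instead establish the \emph{Heine--Borel (properness)} property: the curve-length lower bound $L(c)\ge\frac{1}{\alpha}\log\bigl(1+\frac{\alpha}{\gamma}|x-x_0|\bigr)$ shows $d(x_0,x)\to\infty$ as $|x|\to\infty$, so closed metric balls are compact. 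Both conditions are equivalent under Hopf--Rinow and either yields existence of minimizing geodesics, so both arguments are valid. Your route is more elementary in that it avoids the geodesic equation and the constant-speed property entirely, and it yields as a byproduct an explicit lower bound on the Riemannian distance in terms of Euclidean distance (logarithmic when $\mf A\ne0$, linear when $\mf A=0$), which quantifies how the metric can degenerate at infinity; the paper's argument is shorter and works directly at the level of the ODE. One small remark: your parenthetical that $g(x)>0$ ``forces $\mf B\ne0$ when $\mf A=0$'' understates the situation --- evaluating the hypothesis at $x=0$ shows $\mf B\ne0$ in all cases, which is what you actually need to guarantee $\beta=|\mf B|>0$ and hence $\gamma>0$ in the logarithm; this is immediate, so it is not a gap.
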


\begin{proof}
	By the Hopf Rinow theorem, the result follows if any geodesic segment can be extended indefinitely \cite{docarmo1992riemannian}. 
	By assumption, the metric $M(x)$ satisfies the lower bound $\lambda_{\min}(M(x))\ge \frac{1}{|\mf Ax+\mf B|^2}$. Take any geodesic $\gamma(s)$ defined on some interval $s\in (a, b)\subset \R$. Geodesics have constant speed: $\gamma_s'M(\gamma)\gamma_s=c$ for some $c>0$, so
$
{|\gamma_s|}\le c{|\mf A\gamma+\mf B|}.
$
This implies that $|\gamma|$ grows at-worst exponentially as the parameter $s\rightarrow \pm \infty$. Therefore solutions do not exhibit finite escape in either direction, so the interval of existence is $(-\infty, \infty)$.\end{proof}

%\begin{lem}\label{lem:lVdecr}  Consider the system \eqref{eq:sys}, \eqref{eq:diffdyn}, and smooth real-valued functions $V(x,\delta_x,t)$ and $\kappa(x,\delta_x,t)$. 
%Suppose \eqref{eq:diffCLF} is satisfied, then an admissable differential feedback can be explicitly constructed.
%\end{lem}
\begin{lem}\label{lem:lVdecr} Consider the system \eqref{eq:sys}, \eqref{eq:diffdyn}, and  smooth real-valued ``differential storage function'' $V(x,\d_x,t)$, and ``differential supply rate'' $\kappa(x,\d_x,t)$. If for all $x,u,\d_x\ne 0, t$
\begin{equation}\label{eq:diffCLF}
\pder[V]{\delta_x}B = 0 \Longrightarrow  \pder[V]{t} + \pder[V]{x}(f+Bu)+\pder[V]{\delta_x}A\delta_x < \kappa.
\end{equation}
then a ``differential feedback controller'' $k_\delta(x,\delta_x,u,t)\in\R^m$ exists that satisfies the following two properties:	
%The main strategy of this paper is to prove that $x(t)$ can be made to converge to $x^\star(t)$ by showing that the length of a curve joining them can be made to shrink sufficiently fast. To this end, consider smooth real-valued functions $V(x,\d_x,t)$, a ``differential storage function'', and  $\kappa(x,\d_x,t)$, a ``differential supply rate''. Suppose there exists a differential feedback control $\delta_u(x,\d_x,u,t)$ that satisfies the following properties: 
\begin{enumerate}
	\item \textbf{Closed-loop dissipativity}: for all $x,u,\d_x\ne 0, t$
	\[
\dot V = \pder[V]{t} + \pder[V]{x}(f+Bu)+\pder[V]{\delta_x}(A\delta_x+Bk_\delta)<\kappa.
\]
	\item \textbf{Path-integrability}: for any regular curve  $c$, and any $u_0\in\R^m, t\in\R^+$, a unique solution of the following integral equation exists on $s\in[0,1]$:
\begin{equation}\label{eq:control_int}
\u(s) = u_0+\int_0^s k_\delta (c(\s),c_s(\s), \u(\s), t)d\s.
\end{equation}
\end{enumerate}
\end{lem}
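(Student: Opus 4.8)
The plan is to read \eqref{eq:diffCLF} as the hypothesis of an Artstein-type ``universal formula'' and to build $k_\delta$ explicitly. First I would rewrite $\dot V-\kappa$ (the quantity property~1 requires to be negative) as an affine function of $\delta_u=k_\delta$: put $\beta(x,\delta_x,t)':=\pder[V]{\delta_x}B$ and collect the rest into
\[
a(x,\delta_x,u,t):=\pder[V]{t}+\pder[V]{x}(f+Bu)+\pder[V]{\delta_x}A\delta_x-\kappa ,
\]
so that along the closed loop $\dot V-\kappa=a+\beta'k_\delta$. Since $u$ enters $a$ only through $\pder[V]{x}Bu$ and through the $u_i$-dependent part of $A$, the term $a$ is affine in $u$, say $a=a_0(x,\delta_x,t)+g(x,\delta_x,t)'u$, while $\beta$ is $u$-free. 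Now \eqref{eq:diffCLF} asserts exactly $\beta=0\Rightarrow a<0$ for all $u$; since $a$ is affine in $u$, this in addition forces $g=0$ and $a_0<0$ wherever $\beta=0$. Thus $(a,\beta)$ satisfies the small-control hypothesis behind the universal construction of \cite{sontag1989universal}.

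I would then take $k_\delta:=0$ where $\beta=0$ and, where $\beta\neq0$,
\[
k_\delta:=-\frac{a+\sqrt{a^2+(\beta'\beta)^2}}{\beta'\beta}\,\beta .
\]
Property~1 is the one-line identity $a+\beta'k_\delta=-\sqrt{a^2+(\beta'\beta)^2}<0$ for $\beta\neq0$ and $a+\beta'k_\delta=a_0<0$ for $\beta=0$; strictness is automatic in both cases. By the regularity part of the universal construction, $k_\delta$ is as smooth as its data on the open set $\{\beta\neq0\}\cup\{a_0<0\}$, which by the previous paragraph contains every point with $\beta=0$, and on which $k_\delta$ depends smoothly on $(x,\delta_x,u,t)$ and vanishes on $\{\beta=0\}$. (If $k_\delta$ is wanted \emph{affine} in $u$, first subtract the pre-cancellation term $-(g'u)\beta/(\beta'\beta)$ and apply the formula to the $u$-free residual $a_0$; nothing below changes.)

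For property~2, fix a regular curve $c$ and $u_0\in\R^m$, $t\in\R^+$. The integral equation \eqref{eq:control_int} is equivalent to the initial-value problem $\frac{d\u}{ds}=k_\delta\big(c(s),c_s(s),\u(s),t\big)$, $\u(0)=u_0$, on $[0,1]$. Along $c$ the functions $a_0,g,\beta$ evaluated at $\big(c(s),c_s(s),t\big)$ are continuous, hence bounded on $[0,1]$, and wherever $\beta$ vanishes along $c$ one has $a_0<0$ strictly, so the curve never leaves the smooth domain of $k_\delta$; hence $s\mapsto k_\delta\big(c(s),c_s(s),\cdot,t\big)$ is continuous in $s$ and locally Lipschitz in $\u$, and Picard--Lindel\"of gives a unique local solution. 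To extend it to all of $[0,1]$ I would establish a uniform bound $\big|k_\delta\big(c(s),c_s(s),\u,t\big)\big|\le C_1+C_2|\u|$ with $C_1,C_2$ depending only on $c$ and $t$, and conclude by Gr\"onwall. The affineness of $a$ in $\u$ and the $\u$-independence of $\beta$ tame every term of that bound except the ratio $|g|/|\beta|$.

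Keeping $|g|/|\beta|$ bounded near $\{\beta=0\}$ --- the directions orthogonal to the span of the actuators --- is the step I expect to be the main obstacle. It, like the smoothness of $k_\delta$ there, requires extracting from \eqref{eq:diffCLF} more than its bare pointwise content on $\{\beta=0\}$: namely $a_0<0$ with room to spare, and $g$ dominated by $\beta$ on a whole neighbourhood, rather than merely vanishing on the set itself. Everything else --- the algebra for property~1, the reduction of \eqref{eq:control_int} to an ODE, and the Picard--Lindel\"of/Gr\"onwall argument --- is routine.
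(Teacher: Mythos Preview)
Your construction of $k_\delta$ via Sontag's universal formula and the verification of Property~1 are exactly what the paper does (with $\mathfrak b=\beta'\beta$ and $\mathfrak a=a$). The decomposition $a=a_0+g'u$ and the observation that $\beta=0$ forces $g=0$ are also correct and used implicitly in the paper.

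Where you diverge is Property~2. You aim for a uniform-in-$s$ linear growth bound $|k_\delta|\le C_1+C_2|\u|$ and then Gr\"onwall, and you correctly flag that this requires controlling $|g|/|\beta|$ near $\{\beta=0\}$ --- a bound that the bare pointwise hypothesis \eqref{eq:diffCLF} does \emph{not} deliver. The paper sidesteps this obstacle entirely by arguing by contradiction and \emph{localizing}: if the solution of \eqref{eq:control_int} escapes at some $\bar s$, then $\rho\to\infty$ along the solution; but wherever $\mathfrak b=0$ one has $\rho=0$ (since then $\mathfrak a<0$ for all $u$), so $\mathfrak b>0$ on a neighbourhood of $\bar s$. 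On any closed interval $\mathfrak S$ where $\mathfrak b$ is bounded away from zero, the map $\mathfrak a\mapsto(\mathfrak a+\sqrt{\mathfrak a^2+\mathfrak b^2})/\mathfrak b$ has derivative bounded by $2/\min_{\mathfrak S}\mathfrak b$, and since $\mathfrak a$ is affine in $\u$, $\rho$ --- and hence $k_\delta=-\rho\beta$ --- is globally Lipschitz in $\u$ on $\mathfrak S$. Standard existence results then preclude finite escape on $\mathfrak S$, a contradiction.

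The point is that you do not need a uniform Lipschitz constant across all of $[0,1]$; you only need one on a closed interval containing the putative escape time, and the structure of Sontag's formula hands you exactly that: the only place the $\u$-Lipschitz constant could degenerate is where $\mathfrak b\to 0$, but there $\rho$ is identically zero and cannot blow up. So your ``main obstacle'' is real for the Gr\"onwall route you propose, but it dissolves once you switch to the paper's contradiction/localization argument --- no quantitative domination of $g$ by $\beta$ is needed.
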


\begin{proof}%[Proof of Lemma \ref{lem:lVdecr}]
%We construct a differential feedback control and integrate it along the path $\gamma$. The critical fact to establish is that the resulting integral equation does not blow up.
For brevity of notation, let us define
\begin{align}
\mf a(x,\delta_x,u,t) &:= \pder[V]{t}+\pder[V]{x}(f+Bu)+\pder[V]{\delta_x}A\delta_x -\kappa \notag\\
\mf b (x,\delta_x,t) &:= \pder[V]{\delta_x}BB'\pder[V]{\delta_x}'.\notag
\end{align}
Note that by construction $b\ge 0$, and by assumption \eqref{eq:diffCLF}, for all $x, \delta_x\ne 0, u, t$, either $\mf a<0$ or $\mf b>0$. Now, define
\begin{equation}\label{eq:rhocases}
	\rho(x,\delta_x,u,t) := \begin{cases}
    0,& \text{if } \mf a<0,\\
    \frac{\mf a+\sqrt{\mf a^2+\mf b^2}}{\mf b},       & \text{otherwise}.
    
\end{cases}
\end{equation}
It follows from \cite[Thm 1]{sontag1989universal} that $\rho$ is a smooth for all $x, \delta_x\ne 0, u, t$.
%By assumption , and under these conditions $\rho$ is a smooth function of $\mf a, \mf b$ \cite{sontag1989universal}.
%$\rho = \rho_0+\sum \rho_iu_i$. We claim that for every fixed $t$, each of $\rho_i, i = 0, 1, ..., m$ is bounded on any compact subset $S$ of $x, \delta_x$. Each is of the form $\rho_i=\mf a_i/\mf b$ if $\mf a\ge 0$ and zero otherwise. For each $t$ set  $\bar S(t) = S\cap {x, \delta_x: \mf a = 0}$ is compact. Depends on $u$.
%
%For each fixed $t$, on any compact subset of $x, \delta_x$, each of $\mf a_0,..., \mf a_m, \mf b$ is bounded, by continuity.
%
%Note that the contrapositive of condition \eqref{eq:diffCLF} states that $\mf a\ge 0 \Longrightarrow\mf  b>0$, so division by $\mf b$ in the formula for $\rho$ is not problematic.
%Also that, since $a(x,u,\delta_x,t)$ is affine in $u$ for each $x,\delta_x,t$, so is $\rho(x,u,\delta_x,t)$. 
Now we construct the control:
\begin{equation}\label{eq:diff_fb}
k_\delta(x,\delta_x,u,t) = -\rho(x,\delta_x,u,t)B(x,t)'\pder[V(x,\delta_x, t)]{\delta_x}'.
\end{equation}
Substituting into \eqref{eq:diffdyn} establishes closed-loop dissipativity:
$
\ddt V = \mf a-\rho \mf b+\kappa = \kappa -\sqrt{\mf a^2+\mf b^2}<\kappa.
$

We now prove path-integrability, i.e. that a solution of \eqref{eq:control_int} exists. We will prove this by contradiction. By assumption the curve $c$ is regular, so $c_s(\s)\ne 0$ for all $\s$, so $k_\delta$ is a smooth function of its third argument for all $\s$. Hence integrability follows unless there is finite escape at some $s=\bar s\le 1$. It is clear from \eqref{eq:diff_fb} that this would imply $\rho\rightarrow \infty$ as $s\rightarrow \bar s$.

First, we observe that $\mf b>0$ in a neighbourhood of $\bar s$, since if $\mf b=0$ then $\rho=0$, but we require $\rho$ to blow up.

Second, we note that the only dependence $\rho$ has on $u$ is via $\mf a$, which by construction is an $s$-dependent affine function of $u$. It follows from  \eqref{eq:rhocases} that if there is a closed interval $\mf S\subset [0,1]$ such that $\mf b>0$ for $s\in\mf S$, then $\rho$ is a globally Lipschitz function of $u$ on $\mf S$. By standard comparison results, e.g., \cite[Thm 3.2]{khalil2002nonlinear}, a unique solution to \eqref{eq:control_int} exists on this interval, which contradicts finite escape at  $s=\bar s$.
\end{proof}

\begin{lem}\label{lem:OL}
	Given a control contraction metric $M(x,t)$, a time interval $\mathcal T=[t_i,t_{i+1}]\subset\R^+$, and a path $c(t_i)$ connecting $x^\star(t_i)$ to $x(t_i)$, suppose the open-loop control signal in Section \ref{sec:OL_construction} is applied on $\mathcal T$ initalized with $c$, then  for all $t\in\mathcal T$
\begin{equation}\label{eq:distance_bound_OL}
d(x^\star(t),x(t),t)\le e^{-\lambda(t-t_i)}L(c(t_i),t_i).
\end{equation}
\end{lem}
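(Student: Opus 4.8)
The plan is to track the Riemannian length of the path $c(t)$ defined by the forward image \eqref{eq:forward_image} and show it decays at rate $\lambda$, then bound the distance $d(x^\star(t),x(t),t)$ above by this length. First I would observe that, by construction, the controls along the path are generated by the path-integrated differential feedback $k_p$ of \eqref{eq:k_p}, whose $s$-derivative is $\pder[k_p]{s}=k_\delta$ (the differential feedback from Lemma~\ref{lem:lVdecr}, applied with $V=\delta_x'M\delta_x$ and $\kappa = -2\lambda\,\delta_x'M\delta_x$). Differentiating \eqref{eq:forward_image} with respect to $s$ shows that the tangent vector $c_s(t,s)$ evolves according to the differential dynamics \eqref{eq:diffdyn} with state $c(t,s)$, input $k_p(c(t),u^\star(t),t,s)$ and differential input $\pder[k_p]{s}=k_\delta(c(t,s),c_s(t,s),\dots)$. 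This is the key point: at \emph{every} value of $s$, the pair $(c(t,s),c_s(t,s))$ is a genuine closed-loop trajectory of the differential system under the controller $k_\delta$.

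Next I would invoke the closed-loop dissipativity property of Lemma~\ref{lem:lVdecr}: for each fixed $s$,
\[
\ddt\big(c_s(t,s)'M(c(t,s),t)\,c_s(t,s)\big) < -2\lambda\, c_s(t,s)'M(c(t,s),t)\,c_s(t,s),
\]
so $\|c_s(t,s)\|_{c(t,s),t}\le e^{-\lambda(t-t_i)}\|c_s(t_i,s)\|_{c(t_i,s),t_i}$ pointwise in $s$ (with the usual comparison-lemma argument, and handling $c_s=0$ trivially). Integrating over $s\in[0,1]$ and using the definition of Riemannian length gives
\[
L(c(t),t)=\int_0^1\|c_s(t,s)\|_{c(t,s),t}\,ds \le e^{-\lambda(t-t_i)}\int_0^1\|c_s(t_i,s)\|_{c(t_i,s),t_i}\,ds = e^{-\lambda(t-t_i)}L(c(t_i),t_i).
\]
Finally, since $c(t)$ is by \eqref{eq:forward_image} a curve joining the forward image of $x^\star(t_i)$ — which is $x^\star(t)$, because applying $k_p$ at $s=0$ yields exactly $u^\star(t)$ and $x^\star$ is a solution of \eqref{eq:sys} — to the forward image of $x(t_i)$, which is $x(t)$ (applying $k_p$ at $s=1$ gives the control $u(t)$ actually applied), the path $c(t)$ lies in $\Gamma(x^\star(t),x(t))$. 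Hence $d(x^\star(t),x(t),t)\le L(c(t),t)\le e^{-\lambda(t-t_i)}L(c(t_i),t_i)$, which is \eqref{eq:distance_bound_OL}.

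The main obstacle is the interchange of the $s$- and $t$-derivatives needed to establish that $c_s(t,s)$ satisfies the differential dynamics, and the accompanying regularity/well-posedness bookkeeping: one must ensure $c(t,s)$ is jointly smooth enough in $(t,s)$ (so that $\ddt\pder[c]{s}=\pder{s}\ddt c$ holds), that the forward image $c(t,\cdot)$ remains a well-defined regular curve on all of $\mathcal T$ (no finite escape), and that the differential controller $k_\delta$ and the path-integral $k_p$ are evaluated consistently. The path-integrability guaranteed by Lemma~\ref{lem:lVdecr} is exactly what makes $k_p$ well-defined along $c$, and uniform boundedness of $M$ together with the exponential length decay prevents blow-up of $c(t,\cdot)$; once these are in place the rest is the pointwise-in-$s$ comparison argument above.
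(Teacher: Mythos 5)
Your proof is correct and follows essentially the same route as the paper: differentiate the forward image in $s$ to see that $c_s(t,s)$ obeys the closed-loop differential dynamics under $k_\delta$, invoke the dissipativity from Lemma~\ref{lem:lVdecr}, integrate over $s$, and bound $d(x^\star(t),x(t),t)$ by the length of $c(t)\in\Gamma(x^\star(t),x(t))$. The only (harmless, arguably slightly cleaner) difference is that you contract $\|c_s\|_{c,t}$ pointwise in $s$ and integrate to bound the length directly, whereas the paper integrates the squared norm to get the energy functional $E(c(t),t)$ and takes square roots at the end, which implicitly uses $E=L^2$ for the initial curve.
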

\begin{proof}
By construction, $c(t,0)=x^\star(t)$ and $c(t,1)=x(t)$ for all $t\in\mathcal T$. Furthermore, $c_s(t,s)=\pder[c]{s}(t,s)$ satisfies $
\ddt c_s(t,s) = A(c(t,s),k_p,t)c_s+B(c(t,s),t)k_\delta(t,s)
$
for all $t\in \mathcal T, s\in[0,1]$. 

By construction of $k_\delta$ we have
$
\ddt (c_s'M(c,t)c_s)<-2\lambda c_s'M(c,t)c_s.
$
Integrating with respect to $s$ gives
$
\ddt E(c(t),t)<-2\lambda E(c(t),t)
$
and integrating with respect to $t$ gives
$
E(c(t),t)\le e^{-2\lambda(t-t_i)}E(c(t_i),t_i)
$
for $t\in\mathcal T$, with strict inequality for $t>t_i$. 
Taking square roots gives
$
L(c(t),t)\le e^{-\lambda(t-t_i)}L(c(t_i),t_i).
$
Now, for each $t$ the curve $c(t)$ connects $x^\star(t)$ to $x(t)$, and the Riemannian distance between these points is the infimum of lengths of such curves, so we obtain the bound \eqref{eq:distance_bound_OL}.
\end{proof}

\begin{proof}[Proof of Theorem \ref{thm:CCM}]
We will show that for each type of controller, the Riemannian distance between $x(t)$ and $x^\star(t)$ decreases exponentially:
\begin{equation}\label{eq:dist_decrease}
d(x^\star(t),x(t),t)\le e^{-\lambda t}d(x^\star(0),x(0),0)
\end{equation}
Exponential convergence in the Euclidean metric is then implied by uniform boundedness of $M(x,t)$. In particular, it is straightforward to show that $\sqrt{\alpha_1}|x(t)-x^\star(t)|\le d(x^\star(t),x(t),t)$ and $d(x(0),x^\star(0),0)\le \sqrt{\alpha_2}|x(0)-x^\star(0)|$. Combining with \eqref{eq:dist_decrease} gives:
\[
\sqrt{\alpha_1}|x(t)-x^\star(t)|\le e^{-\lambda t}\sqrt{\alpha_2}|x(0)-x^\star(0)|,
\] i.e. \eqref{eq:expstab} holds with $R= \sqrt{\frac{\alpha_2}{\alpha_1}}$.

For the open-loop control, Lemma \ref{lem:OL} implies that for any $t>0$ we have $d(x^\star(t),x(t),t)\le e^{-\lambda t}L(c(0),0)$. If the minimal geodesic is chosen for $c(0)$ then $L(c(0),0)=d(x^\star(0),x(0),0)$, and so we obtain \eqref{eq:dist_decrease}.

Similarly, for the sampled-data controller, on each interval $[t_i, t_{i+1})$ we have
\[
d(x^\star(t),x(t),t)\le L(c(t),t)\le e^{-\lambda(t-t_i)}L(c(t_i),t_i),\]
and then at time $t_{i+1}$ a minimal geodesic $\gamma_{i+1}$ is computed which has length $L(\gamma_{i+1})=d(x^\star(t_{i+1}),x(t_{i+1}),t_{i+1})\le \lim_{t\rightarrow t_{i+1}} L(c(t),t)$ where the limit in $t$ is from the left.

For continuous feedback, we first show that the given controller is smooth on $\mathfrak D(x^\star,t)$. It follows from  \cite[Prop. 3.5, p. 117]{docarmo1992riemannian} that the mapping $(x,x^\star,t)\mapsto \gamma$ is smooth. Now, it follows from Lemma \ref{lem:lVdecr} and the smoothness of $\rho$ defined in \eqref{eq:rhocases} that the mapping $(x,x^\star,t)\mapsto u$ is smooth. 

%Note also that with $V(x,\delta_x,t)=\delta_x'M(x,t)\delta_x$ and $\rho, B$ smooth functions, $k_\delta$ defined in \eqref{eq:diff_fb} goes to zero smoothly as $\delta\rightarrow 0$, hence $k_p$ is smooth at $x=x^\star$.

%%%

To show continuity at $x=x^\star$, we first note that since $V=\d_x'M\d_x$ is quadratic in $\d_x$, and the differential dynamics \eqref{eq:diffdyn} are linear in $\d_x$, the {\em small control property} of \cite{sontag1989universal} holds, i.e. for any $\epsilon>0$ there exists $\eta>0$ such that for $\d_x\ne 0$ with $|\delta_x|\le \eta$ there exists $\d_u$ with $|\delta_u|< \epsilon$ satisfying $\dot V<0$. This implies that $k_\delta$ in \eqref{eq:diff_fb} is continuous in $\delta_x$ at $\d_x=0$.

%As $x\rightarrow x^\star$ we have $\gamma(s)\rightarrow s(x-x^\star)$ since short geodesics are straight lines. So $ \delta_u \approx k_\d(x^\star, x-x^\star,u^\star,t)$.

%It follows from  that, for a sufficiently small all points $x$ in a neighbourhood of $x^\star$ lie on some geodesic $\gamma^0$, this suffices to show continuity of the controller at $x=x^\star$.

Let us show convergence of the control law to $u^\star$ for a sequence of states $x^k$ with $\lim_{k\rightarrow\infty}x^k= x^\star(t)$. It suffices to construct the sequence $x^k$ along a particular but arbitrary geodesic $\gamma^0$, e.g. $x^k=\gamma^0(1/k)$, since all states in a neighborhood of $x^\star$ lie on such a geodesic \cite[Thm 3.7]{docarmo1992riemannian}.  Let $\gamma^k$ denote the segment from $x^\star(t)$ to $x^k$. Now, the constant-speed property of geodesics states that  $\|\g_s^k\|_s=L(\g^k,t)$ for all $s$, and so   by uniform-boundedness of the metric $|\gamma_s^k|\rightarrow 0$ uniformly in $s\in[0,1]$.
%Now, in a neighbourhood of $x^\star(t)$, either $\mf a<0$ or $\mf b>0$. In the first case, $k_\delta=0$ nearby $x^\star$, so continuity is clear. In the second case, as argued for Lemma \ref{lem:lVdecr}, $k_\delta$ is uniformly Lipschitz in $u$ on some closed interval $s\in[0,\bar s]$. 
Therefore, by continuous dependence of \eqref{eq:k_p} on $k_\delta$ (e.g. \cite[Lemma 3.1]{Hale80}) the sequence of solutions $k_p(\gamma^k,u^\star(t),t,1)$ of \eqref{eq:k_p} converges to $u^\star(t)$ as $k\rightarrow \infty$.

To show exponential decrease in distance, recall \cite[Prop 2.4, p195]{docarmo1992riemannian} that on $\mathfrak D(x^\star,t)$ the energy is a smooth function of its endpoints, i.e.  $(a,b)\mapsto E(a,b,t)$ is smooth for each $t$.
Now, consider the open-loop control initialised with $t_i=t$ and $\gamma$ for $c(t_i)$. This open-loop control is identical to the proposed continuous feedback at time $t$, and therefore $\dot x(t)$ is also identical.  Hence there exists a (non-minimal) path $c(\cdot)$ defined on $[t,t+\epsilon)$ with $c(t)=\gamma_i$ satisfying
$
\ddt E(c(t),t)<-2\lambda E(\gamma,t),
$
 therefore $\ddt E(x(t),x^\star(t),t)<-2\lambda E(x(t),x^\star(t),t)$. Integrating with respect to time gives the result.
\end{proof}

\begin{proof}[Proof of Theorem \ref{thm:coord}]
For the first statement, under the feedback transformation we have a new control system affine in $v$:
$
\dot x=(f+B\alpha) + B\beta v,
$
and the associated differential dynamics are of the form \eqref{eq:diffdyn} with
$
A_v(x,v) =A(x,u(x,v))+B(x)\pder[u(x,v)]{x}$ and $B_v(x) = B(x)\beta(x).
$
Now, consider the dual CCM condition \eqref{eq:weak_kernel}. The same annihilator matrix $B_\perp$ can be used since $B_\perp'B_v = B_\perp B \beta =0$. Now, substitute $A_v$ for $A$ in \eqref{eq:weak_kernel}, and notice that since $B_\perp'B = 0$ the second term ($B\pder[u]{x}$) in $A_v$ has no effect. The first term in $A_v$  is just $A(x,u)$ evaluated a particular value of $u$, and \eqref{eq:weak_kernel} holds for all $u$, and hence can be applied under feedback transformation. 

For the second statement,
%under the mapping $x\mapsto \xi$, tangent vectors transform as $\delta_\xi = \Phi(x,t)\delta_x$, and 
the differential dynamics transforms as
$
\dot \delta_\xi=A_\xi\delta_\xi + {B_\xi}\delta_u, 
$
with $A_\xi=\dot \Phi\Psi+\Phi A\Psi$ and $B_\xi:=\Phi B$. 
Taking $M_\xi = \Psi' M \Psi$ and the identity $\dot\Psi = -\Psi\dot\Phi\Psi$ we have
$
\dot M_\xi =  -\Psi'\dot\Phi'\Psi' M\Psi + \Psi' \dot M \Psi - \Psi' M \Psi\dot\Phi\Psi,
$
then straightforward calculation gives
$
\dot M_\xi +A_\xi'M_\xi + M_\xi A_\xi + 2\lambda M_\xi = \Psi'(\dot M + A'M + MA +2\lambda M)\Psi
$ and the result follows from the fact that $\delta_x=\Psi \delta_\xi$. The third statement then follows by considering points $x=\phi^{-1}(\xi)$.
\end{proof}

\begin{proof}[Proof of Theorem \ref{thm:manifold_stab}] We will construct the open-loop controller, but others are analogous to the construction for Theorem \ref{thm:CCM}. 
Since there exists a CCM $M(x,t)$ satisfying the strong conditions, we can construct a differential feedback controller for \eqref{eq:virt} of the form
$
k_\delta(x,\delta_x,t) = -\frac{1}{2}\rho(x,t)\bar B(x,t)'M(x,t)\delta_x
$. By construction of $\bar B$ this decomposes as
$
\delta_u  = -\frac{1}{2}\rho B'M\delta_x, \ \delta_v  = -\frac{1}{2}\rho G'M\delta_x.
$
Now, for a given path $c(t)\in\Gamma(\bar x(t), x(t))$ we construct the following paths of signals for the real and virtual control inputs:
\begin{align}\label{eq:control_int_v1}
u(t,s) =& u^\star(\bar x,t)+\int_0^s \delta_u (c(\s),c_s(\s), t)d\s,\\
v(t,s) =& -\int_s^1 \delta_v (c(\s),c_s(\s), t)d\s,
\end{align}
noting that $\pder[u]{s}=\delta_u$ and $\pder[v]{s}=\delta_v$. Therefore, by an analogous argument to Theorem \ref{thm:CCM}, one can construct open-loop, sampled-data, or continuous almost-everywhere controllers such that the length of $c(t)$ shrinks exponentially.

%
%Now, given $\bar x(t)\in Z(t)$ and any smooth path $c(t)\in\Gamma(\bar x(t), x(t)$ we construct smooth paths of controls: $
%k_p(t,s) = u^\star(\bar x,t)+\int_0^s \delta_u (c(\s),c_s(\s), t)d\s$ and $
%k_v(t,s) = -\int_s^1 \delta_v (c(\s),c_s(\s), t)d\s.
%$
%I.e. for each $s\in[0,1]$ solve \eqref{eq:virt} with $x(t)=c(t,s), u(t)=k_p(t,s), v(t)=k_v(t,s)$. It follows from the same reasoning as Lemma \ref{lem:OL} that the length of $c(t)$ shrinks exponentially.

Now, notice that at $s=1$ we have $v(t,1)=0$, so $\dot c(t,1) = f(c(t,1),t)+B(c(t,1),t)u(t)$ and so $c(t,1)=x(t)$ for all $t$, i.e. the ``real'' dynamics are recovered.
On the other hand, at $s=0$ we have the $c(t,0)=\bar x(t)$ (the ``shadow'' state) with dynamics $
\dot {\bar x} = f(\bar x,t)+B(\bar x,t)u^\star(\bar x,t)+G(\bar x,t)k_v(t,0).
$
But by the assumption that $Z$ is invariant for $\dot x=f+Bu^\star$, and that $\pder[z]{x}G=0$, it follows that $\bar x(t)\in Z(t)$ for all $t$.

So, since $L(c(t))\rightarrow 0$ exponentially and $c(t,1)=x(t)$ and $c(t,0)\in Z(t)$, it follows that $x(t)\rightarrow Z(t)$ exponentially.
\end{proof}

\bibliographystyle{IEEEtran}
\bibliography{elib}

\end{document}